\documentclass[preprint]{elsarticle}

\pdfoutput=1

\usepackage[utf8]{inputenc}
\usepackage{graphicx}
\usepackage{amssymb}
\usepackage{amsmath}
\usepackage{textcomp}
\usepackage{gensymb}
\usepackage{subfig}
\usepackage[ruled,linesnumbered,noend]{algorithm2e}
\usepackage{lineno}

\usepackage{amsthm}
\usepackage{setspace}
\usepackage{cases}
\usepackage{mathtools}

\usepackage{hyperref}


%
\usepackage[usenames]{color}
\usepackage{url}

\usepackage{cprotect}
\usepackage{hyperref}
\hypersetup{
    colorlinks=true,
    linkcolor=cyan,
    filecolor=magenta,      
    urlcolor=blue,
}

%

%
\usepackage{subfig}
\usepackage{enumitem}
\usepackage{epstopdf}
\usepackage[misc,geometry]{ifsym}

%

\newcommand{\CCW}{\mathrm{CCW}}

\definecolor{mRed}{RGB}{254,74,73}
\definecolor{mBlue}{RGB}{70,87,117}
\definecolor{mygenta}{RGB}{216,0,216}

\def\ray#1{\overset{\mapsto}{#1}}

%
\newcommand{\pjrC}[1]{#1}
\newcommand{\pjrF}[1]{}

\newcommand{\asbC}[1]{#1}
\newcommand{\asbF}[1]{}

\newcommand{\cidC}[1]{#1}
\newcommand{\cidF}[1]{}

%

\newcommand{\POLYSET}{\mathcal{S}} 
\newcommand{\ARRANG}{\mathcal{A}} 

\newcommand{\NEIGH}{\mathcal{N}} 

\newcommand{\CONVEX}{\text{CVX}} 

\newcommand{\INTERIOR}[1]{\text{INT}(#1)}

\newtheorem{theorem}{Theorem}[section]
\newtheorem{corollary}{Corollary}[theorem]
\newtheorem{lemma}[theorem]{Lemma}

\newcommand{\MCPP}{{\sc mcpp}\xspace}  

\newcommand{\fullSP}{\asbC{FullSP}\xspace}
\newcommand{\CGSP}{\asbC{CGSP}\xspace}

%

\definecolor{mRed}{RGB}{254,74,73}
\definecolor{mBlue}{RGB}{70,87,117}
\definecolor{mGreen}{RGB}{86,227,159}
\definecolor{darkgreen}{rgb}{0,0.5,0}
\definecolor{darkorange}{rgb}{1.0,0.3,0}









\newcounter{mod}
\newenvironment{model}{\vspace{-1ex}\stepcounter{mod}%
\bgroup

	\begin{subequations}%
		}%
	{\end{subequations}%
\egroup
}

\begin{document}
\begin{frontmatter}

\title{%
Solving the Minimum Convex Partition of Point Sets with Integer Programming}
\author[4]{Allan Sapucaia}
\ead{allansapucaia@gmail.com}

\author[4]{Pedro J. de Rezende\corref{cor}}
\cortext[cor]{Corresponding author}
\ead{rezende@ic.unicamp.br}

\author[4]{Cid C. de Souza}
\ead{cid@ic.unicamp.br}

\address[4]{Institute of Computing, University of Campinas, Brazil}

\begin{abstract}
\cidC{\pjrC{The partition of a problem into smaller sub-problems satisfying certain properties is often a key ingredient in the design of divide-and-conquer algorithms. For questions related to location, the partition problem can be modeled, in geometric terms, as finding a subdivision of a planar map -- which represents, say, a geographical area -- into regions subject to certain conditions while optimizing some objective function.}}
\cidC{\pjrC{In this paper, we investigate one of these geometric problems known as the  Minimum Convex Partition Problem (\MCPP).}}
A \textit{convex partition} of a point set $P$ in the plane is a subdivision of the convex hull of $P$ whose edges are segments with both endpoints in $P$ and such that all internal faces are empty convex polygons.
The \MCPP is an NP-hard problem where one seeks to find a convex partition with the least number of faces.

We present a novel polygon-based integer programming formulation for the \MCPP, which leads to better dual bounds than the previously known edge-based model. Moreover, we introduce a primal heuristic, a branching rule and a pricing algorithm. The combination of these techniques leads to the ability to solve instances with twice as many points as previously possible while constrained to identical computational resources.
A comprehensive experimental study is presented to show the impact of our design choices.
\end{abstract}

\begin{keyword}
Computational Geometry, Combinatorial Optimization, Minimum Convex Partition
\end{keyword}
\end{frontmatter}
%

\section{Introduction}
\label{sec:intro}





Partitioning problems constitute a fundamental topic in Computational Geometry. One of the best studied among them is the Triangulation Problem where we are given a point set $P$ in the plane, and the goal is to partition its convex hull into triangles using line segments with endpoints in $P$. There are many variations of this problem that optimize different metrics such as segment length and minimum angle~\cite{bern93}.
\cidC{\pjrC{The widespread applicability of triangulations stems from a core idea of the {\em divide-and-conquer} paradigm: if a problem is too complex to be solved at once, break it into smaller, more tractable, sub-problems. Geometric problems tend to benefit from spacial subdivisions generated from their input where sub-problems can more easily be dealt with by considering the faces of the resulting arrangement.
In two dimensions, for instance, it is desirable that these faces be regions satisfying properties that can be explored to make the algorithms more efficient.
Therefore, an often desired structure for the faces is that they be convex polygons.
The smallest convex polygons being triangles, it is understandable why questions regarding triangulations are so vastly investigated.
For an in-depth discussion on triangulations, including structural properties, algorithms and applications, we refer to the book of de Loera et al.~\cite{deloera-triang}.}}
However, all triangulations of a given point set have the same number of triangles and edges and they may be too much of a refinement among the possible convex subdivisions.
\cidC{\pjrC{Hence, from the perspective of divide-and-conquer algorithms, when it comes to the number of sub-problems to be dealt with, which is strongly related to the overall complexity, a triangulation might be just as good as another.

This is where the relevance of the Minimum Convex Partition Problem (\MCPP) stands out.  The \MCPP is a generalization of the triangulation problem in that one seeks a partition of the convex hull of a point set into the minimum number of convex polygons -- some of which might even be triangles.}}

\subsection{Our Contribution}
\cidC{\pjrC{Besides our earlier work~\cite{barboza2019}, there are only a few attempts to solve the \MCPP exactly, and, to date, no other comprehensive experimentation has been reported in the literature.
By casting the problem as an integer linear program (ILP) and designing several algorithmic strategies, we were able, in~\cite{barboza2019}, to solve instances of up to 50 points.

In the present paper, we explore further the usage of ILP to compute optimal solutions for the \MCPP.}}
Our main contribution is a new integer programming formulation for the \MCPP, whereby we solve to provable optimality instances of up to 105 points, \cidC{\pjrC{while using the same amount of computational resources as in our previous work, thus more than doubling the size of the largest instances with known optimum.}}
To achieve this, we devise a primal heuristic and a branching rule, \cidC{\pjrC{and show their effectiveness through experimentation. Also, since the number of variables in the new proposed model is exponential in the cardinality of the input point set, we resort to the use of column generation, which leads to the development of a {\em branch-and-price} algorithm.}}
\cidC{Since ILP is often applied in Operations Research, but \pjrC{much less frequently in Computational Geometry, this article can be seen as a further contribution towards bridging these two communities~\cite{pjr:CanoDO:2015, StephanF2016, deRezende2016, Fekete2017,Fekete2018, pjr:ZambonEJOR:2019, pjr:TozoniTOMS:2016}}}.

\subsection{Literature Review}
The \MCPP has been studied from different perspectives in the literature, including the development of exact and
approximation algorithms, heuristics and theoretical bounds on the optimal value.

Fevens et al.~\cite{fevens-DAM-2001} proposed a dynamic programming formulation for the problem. Let $h$ be the depth of $P$, defined as the number nested convex hulls that need to be removed from $P$ before it becomes empty. Their algorithm has time complexity of $O(n^{3h+3})$, therefore exponential in $h$, which can be as large as $\Theta(n)$. Spillner et al.~\cite{spillner2008} proposed another exact
algorithm with complexity $O(2^k k^4 n^3 + n \log n)$, where $k$ is
the number of points of $P$ in the interior of $CH(P)$.

A compact ILP formulation for the \MCPP based on the construction of a planar subdivision representing a partition obtained by selecting edges from $E(P)$ was proposed by Barboza et al.~\cite{barboza2019}. The authors   presented empirical results showing that, when fed as input to a state-of-the-art ILP solver, it could solve instances with up to 50 points in general position to provable optimality. They also show how to use the linear relaxation of the formulation to find good  heuristic solutions for instances with up to 105 points. Those instances were made publicly available\cite{bsr2018}.

The best approximation algorithm for the \MCPP, with  factor of $\frac{30}{11}$, was proposed by Spillner et al.~\cite{knauer-swat-2006}.

Bounds on the value of optimal solution as a function of $n$ were also studied. Let $F(n)$ denote the maximum cardinality of the minimum convex partition among all instances of size $n$ in general position. The tightest known bounds for $F(n)$ are $\frac{12 n}{11}-2 \le F(n) \le \frac{10n-18}{7}$, where the lower bound was shown in~\cite{garcia2006} and the upper bound in~\cite{lara2004}.

The 2020 Computational Geometry Challenge (CGSHOP)~\cite{cgshop} motivated the advancement of the state-of-the-art for the \MCPP from both theoretical and heuristic points of view.
When the challenge was announced in September 2019, the complexity of the \MCPP was still open and the only known empirical study was by Barboza et al.~\cite{barboza2019}. The challenge consisted in finding good solutions for 346 instances, with sizes ranging from 10 to 1,000,000 points and with different sets of instances, including sets with a large number of collinear points. Those instances and the best solution found were made publicly available. Details about the competition and the teams' progresses are discussed in~\cite{cgshop}. The top three competitors proposed heuristic solutions based on local search.

In November 2019, Grelier~\cite{grelier2019} announced a proof of NP-hardness for the case when the point set is not 
in general position. Their proof relies heavily on the construction of instances with a large number of points lying on the same straight line.

\subsection{Organization of the text}
This paper is organized as follows. Section~\ref{sec:model} describes a polygon-based formulation for the \MCPP with an exponential number of variables. In Section~\ref{sec:colgen}, we address the issue of the number of variables being exponential by describing a column generation approach to solve its linear relaxation. Section~\ref{sec:bnp} explains how we incorporate column generation into a branch-and-pricing framework in order to solve the problem to integrality, which includes a branching rule and implementation details.
Computational experiments and their corresponding results are discussed in Section~\ref{sec:exp}.

\subsection{Basic Notation}
A {\it polygon $p$} of size $t$ can be defined as a cyclic sequence of $t$ distinct points in the plane $p_0, p_1, \cdots, p_{t-1}$, called the {\it vertices} of $p$. Two consecutive points $p_i$ and $p_{i+1}$ of $p$ define a line segment $\overline{p_ip_{i+1}}$, called an {\it edge of $p$}, with addition taken module $t$. We say that $p_i$ and $p_{i+1}$ are the {\it endpoints} or {\it extremes} of the edge $\overline{p_ip_{i+1}}$. Sometimes, for practicality, we consider polygons as given by their cyclic sequence of edges $(\overline{p_0p_1}, \overline{p_1p_2}, $\dots$, \overline{p_{t-2}p_{t-1}},\overline{p_{t-1}p_0})$.

A polygon is called {\it simple} if the intersection between two distinct edges is empty unless they are consecutive, in which case they only share an endpoint. This essentially means that no two edges have a proper crossing. We refer to the sequence of edges of a simple polygon as its {\em boundary}.

Since the boundary of a simple polygon $p$ constitutes a closed planar curve, by the Jordan Curve Theorem, it divides the plane into an unbounded and a bounded region. The latter is called the {\it interior} of $p$, denoted $\INTERIOR{p}$, while the former is the {\em exterior of $p$}. Two polygons are said to be {\it interior-disjoint} if their interiors do not overlap.

Given a sequence of three points $(k, l, m)$ in the plane, we
say that they are {\em colinear}, {\em positively oriented}, or {\em negatively oriented} depending on the value of the cross product $\overrightarrow{kl} \times \overrightarrow{lm}$ being 0, positive or negative.
To simplify notation we write $\CONVEX(k,l,m)=\;$true (or simply $\CONVEX(k,l,m)$) when $(k,l,m)$ is positively oriented.
A geometrical interpretation is that a sequence $(k,l,m)$ is positively (negatively) oriented when we make a left (right) turn at $l$ as we traverse segment $kl$ followed by $lm$.

We say that a polygon $p$ is given in {\em counterclockwise} (CCW) order if $\INTERIOR{p}$ is always to the left as one traverses the edges of $p$ in the order given. It can be proved that this may be checked by verifying that $\CONVEX(p_{j-1}, p_{j}, p_{j+1})$ is true when $p_{j}$ is the lowest vertex of smallest abscissa.
We assume that all simple polygons are given in CCW order.
A simple polygon $p$ is {\it convex} if $\CONVEX(p_{i-1}, p_{i}, p_{i+1})$ for all $0\leq i < t$.
For convenience, given a simple polygon $p$, we may often employ the term polygon to also refer to the union of the boundary and the interior of $p$.

Given a set $P$ of $n$ points in the plane and a polygon $p$, we say that $p$ is {\em empty} with respect to (w.r.t.)~$P$, if $p$ contains no points of $P$ in its interior. When $P$ is understood from the context, we simply say that $p$ is empty.
Denote by $\POLYSET(P)$ the set of all convex polygons with vertices in $P$ that are empty w.r.t.~$P$, and by $CH(P)$ the convex hull of $P$. In this paper, we assume that the sets of points are in {\it general position}, i.e., no three points lie on the same line. Moreover, henceforth all polygons referred to will be convex, unless stated otherwise.

A set $U \subseteq \POLYSET(P)$ of interior-disjoints polygons is called a {\it convex partition} of $P$ if $CH(P) = \bigcup\limits_{p \in U}p$.

Given a set $P$ of $n$ points, let $L(P)$ denote the set of $\Theta(n^2)$ line segments whose endpoints belong to $P$. The {\it complete (geometric) graph} induced by $P$ is $G(P)=(P,E(P))$, where $E(P)=\{\{i,j\}: \overline{ij} \in L\}$. In this text, we refer to a segment $\overline{ij} \in L$ and the corresponding edge in $\{i,j\} \in E(P)$ interchangeably. Similarly, we denote the {\it complete oriented graph} induced by $P$ as $\overrightarrow{G}(P)=(P,A(P))$, where the arcs in $A(P)$ correspond to the two orientations of the edges in $E(P)$.

For each convex partition $U$ of $P$, there is a unique planar graph $G_U=(P,E^U) \subseteq G$, where $E^U$ denotes the set of edges whose line segments belong to polygons in $U$.

The set of line segments $L(P)$ determines a planar subdivision called the {\it arrangement} of $P$. Assuming that $P$ is in general position and $n \ge 3$, this arrangement contains an unbounded face, corresponding to the exterior of $CH(P)$, while all other faces are bounded.
We denote the set of bounded faces of the arrangement of $P$ by $\ARRANG(P)$.
It can be proved that each face $f$ of the arrangement $\ARRANG(P)$ is a convex polygon and that if $f\cap p\neq \emptyset$ for some $p\in \POLYSET(P)$, then ${f} \subseteq {p}$. In this sense, we say that every face of an arrangement $\ARRANG(P)$ is {\it atomic}.
Moreover, a polygon $p \in \POLYSET(P)$ is said to {\em contain a face} $f \in \ARRANG(P)$, denoted by $f \subset p$, if the interior of $p$ contains the interior of $f$.
Lastly, a line or line segment $\ell$ {\it supports} a face $f \in \ARRANG(P)$ if $\ell$ contains one the edges of $f$.

\section{A Set Partition Model for the MCPP}
\label{sec:model}



In this section, we present a new Integer Linear Programming (ILP) model for the \MCPP\ from a standard set partition point of view. Again, let $P$ be a set of $n$ points in the plane.

In this model, we associate a binary variable $u_p$ with each polygon $p \in \POLYSET(P)$ such that polygon $p$ is used in the partition of $CH(P)$ if and only if $u_p = 1$. {This polygon-based approach is different from the edge-based model presented in~\cite{barboza2019}, which will be discussed in Section~\ref{sec:branch}}.

Recall that $\ARRANG(P)$ denotes the set of faces of the arrangement of segments induced by the edges in $E(P)$. We use $f \subset {p}$ to indicate that polygon $p \in \POLYSET(P)$ contains (or covers) face $f \in \ARRANG(P)$.

We now introduce the following Model~\ref{IPsp}:
\begin{model}
	\label{IPsp}
	\begin{align}
	\min \ \ & \sum_{p \in \POLYSET(P)}u_p \label{IPsp_obj}\\
	\text{s.a.} \ \ & \displaystyle \sum_{p \in \POLYSET(P): f \subseteq p}u_p = 1 & \forall f \in \ARRANG(P)\label{IPsp_eq}\\
	& u_p \in \{0,1\} & \forall p \in \POLYSET(P) \label{IPsp_var}
	\end{align}
\end{model}

This model is very straightforward as it has only one family of constraints, namely \eqref{IPsp_eq}, which ensure that each face of the arrangement is covered by a polygon in the solution exactly once. The objective function~\eqref{IPsp_obj} minimizes the number of polygons forming the partition.

The main issue with this model, which will be addressed in the next section, is that the number of polygons in $\POLYSET(P)$ is typically exponential in $n=|P|$, making it impractical to enumerate all variables for large instances.
Besides, the number of constraints in the model, although polynomial in $n$, is also large since an arrangement of $\Theta(n^2)$ line segments can have $O(n^4)$ faces~\cite{aronov92} and each of them corresponds to a constraint in~\eqref{IPsp_eq}.

However, we show next that only a small fraction, $\Theta(n^2)$, of those faces are necessary to ensure that the area within $CH(P)$ is correctly partitioned.

We start by defining necessary concepts and notations. Let $q$ be an arbitrary point in the plane. We denote by $\CCW(i,q)$ the sequence of $n-1$ points in $P\setminus\{i\}$ sorted angularly w.r.t.~the ray $\ray{iq}$ that starts at $i$ an passes through $q$.


Recall that we use $\CONVEX(k,l,m)$ to denote that a sequence of points $(k, l, m)$ is positively oriented.

Let $\overrightarrow{iq}$ denote the \emph{oriented line} that passes through $i$ and $q$ in this order, which divides the plane into two half-planes, called {\it left} and {\it right}. A point $k$ on the left (right) half-plane of $\overrightarrow{iq}$ satisfies  $\CONVEX(i,q,k)$ ($\CONVEX(q,i,k)$).
The remaining points are on $\overrightarrow{iq}$ itself.
We define $\CCW^+(i,q)$ as the prefix of $\CCW(i,q)$ whose points are non-negatively oriented w.r.t.~the oriented line
$\overrightarrow{iq}$. Similarly, we use $\CCW^-(i,q)$ to denote
$\CCW(i,q) - \CCW^+(i,q)$; i.e., the suffix of $\CCW(i,q)$ whose points are negatively oriented w.r.t.~$\overrightarrow{iq}$. We illustrate these concepts in Figure~\ref{fig:ccw}.

\begin{figure}[!htb]
	\centering
	\includegraphics[width=0.45\textwidth]{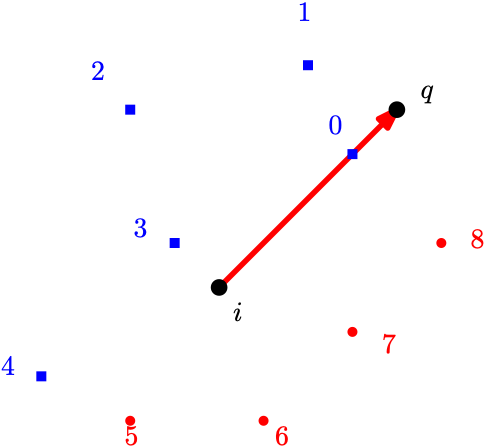}
	\caption{$P=\{i, 0,1,\dots,8\}$, $q\not\in P$,
	$\CCW(i,q)=(0,1,\dots,8)$,
	$\CCW^+(i,q) = (0,1,2,3,4)$ and $\CCW^-(i,q) = (5, 6,7,8)$\label{fig:ccw}}
\end{figure}

Let $i \in P$. A face $f \in \ARRANG(P)$ that has $i$ as one of its vertices is called an {\it $i$-wedge}.
Let $i_x$ be the $x$-coordinate of $i$ and $y_{max}$ be the
maximum $y$ coordinate among all points in $P$ and consider the point $q=(i_x,y_{max}+1)$. Let $\CCW(i,q) = (i_0, i_1, \cdots , i_{n-2})$. For each consecutive pair of points $(i_k, i_{k+1})$ for $ 0 \le k < n-1$, addition being taken mod $n-1$, we have exactly one face of the arrangement $\ARRANG(P)$ that is incident to $i$ and is supported by the edges $\{i,i_k\}$ and $\{i,i_{k+1}\}$. We call this face the {\it $k$-th $i$-wedge}.
We denote by {\it $P$-wedges} the set of all $i$-wedges for $i \in P$.
Figure~\ref{fig:arrang} shows the arrangement of a point set $P$ and highlights the faces that are not $i$-wedges.

\begin{figure}[!htb]
	\centering
	\includegraphics[width=0.5\textwidth]{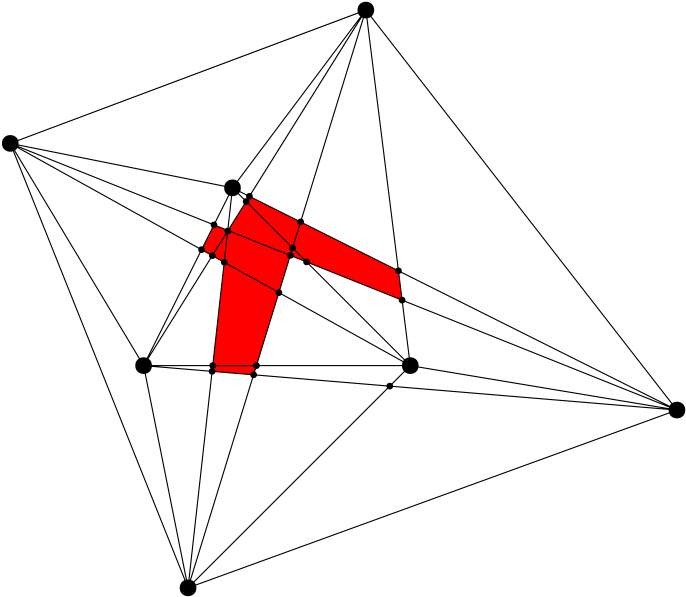}
	\caption{Example of a point set $P$ and its arrangement. Red faces are the only ones that are {\bf not} in $P$-wedges.\label{fig:arrang}}
\end{figure}

Let $U \subset \POLYSET(P)$. We say that a face $f$ of $\ARRANG(P)$ is \emph{exactly-covered} by a polygon $p \in U$, if $p$ is the only polygon in $U$ that covers $f$.
A face that is not exactly-covered by some (single) polygon in $U$ is called \emph{non-exactly-covered}.
A set of faces is exactly-covered if all of its faces are exactly-covered.

In the next lemma, we show that if $u$ is a solution
of Model~\ref{IPsp} with constraint set~\eqref{IPsp_eq} limited to the equations associated to the faces of $\ARRANG(P)$ that are $i$-wedges and $U=\{p \in \POLYSET(P): u_p=1\}$, then $U$ is a partition of $CH(P)$. Our goal is to show that there are no faces in $\ARRANG(P)$ that are non-exactly-covered by polygons in $U$.

\begin{lemma}
Let $u$ be a solution of Model~\ref{IPsp} and $U=\{p \in \POLYSET(P):
y_p=1\}$ be the corresponding set of polygons. Then, if $p$ is a
polygon in $U$, each one of its edges that is not an edge of $CH(P)$
supports exactly one other polygon in $U$. \label{lemma:edge}
\end{lemma}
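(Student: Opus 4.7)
\emph{Setup.} Let $p\in U$ with $e=\overline{ij}$ an edge of $p$ not on $CH(P)$. Let $f$ denote the $i$-wedge of $\ARRANG(P)$ adjacent to $e$ on the side exterior to $p$, bounded angularly at $i$ by the rays $\overrightarrow{ij}$ and $\overrightarrow{i\ell}$, where $\ell$ is the angular neighbor of $j$ in $\CCW(i,\cdot)$ on that side. Let $g$ be the analogous $i$-wedge on the interior side of $p$, bounded at $i$ by $\overrightarrow{ij}$ and $\overrightarrow{i\ell'}$. Since $p$ has vertex $i$ and edge $e$, its angular sector at $i$ contains the sector of $g$, so by the atomicity property stated in the excerpt we have $g\subset p$. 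The wedge constraint for $g$ thus forces $p$ to be the unique polygon of $U$ covering $g$, and the wedge constraint for $f$ yields a unique $p'\in U$ with $f\subset p'$; clearly $p'\ne p$. The plan is to prove that (i) $e$ is an edge of $p'$, and (ii) no third polygon of $U$ is supported by $e$.

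\emph{Existence of the edge.} Since $f\subset p'$ and $i$ is a vertex of $f$, the point $i$ belongs to $p'$. If $i$ were interior to $p'$, a small disk around $i$ would be contained in $p'$ and, by atomicity, every $i$-wedge incident to $i$---in particular $g$---would satisfy $g\subset p'$, contradicting the constraint for $g$. Hence $i$ is a vertex of $p'$, and its angular sector at $i$ contains the sector of $f$. The crux is to show that the boundary ray of this sector on the $e$-side is exactly $\overrightarrow{ij}$. Suppose, for contradiction, that it is $\overrightarrow{im}$ for some $m\in P$ lying angularly strictly past $\overrightarrow{ij}$, on the interior side of $p$. By the very definition of $i$-wedges---no ray from $i$ to a point of $P\setminus\{i\}$ lies strictly inside the angular sector of any $i$-wedge---$m$ must lie at or angularly beyond $\ell'$. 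Then $p'$'s sector at $i$ contains $g$'s sector, and atomicity again yields $g\subset p'$, contradicting the constraint for $g$. Therefore $p'$ has an edge emanating from $i$ along $\overrightarrow{ij}$, and by general position $\overline{ij}$ contains no points of $P$ other than $i$ and $j$, so this edge is exactly $e$.

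\emph{Uniqueness.} Suppose some $p''\in U\setminus\{p,p'\}$ also has $e$ as an edge. Its interior lies on exactly one side of $e$. If it lies on the interior side of $p$, the same angular-sector/atomicity argument applied to $g$ gives $g\subset p''$, and the constraint for $g$ is violated by $u_p+u_{p''}\ge 2$. If it lies on the exterior side, the analogous argument applied to $f$ gives $f\subset p''$, and the constraint for $f$ is violated by $u_{p'}+u_{p''}\ge 2$. Either way we reach a contradiction, so $p'$ is the unique polygon of $U\setminus\{p\}$ supported by $e$.

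\emph{Main obstacle.} The delicate step is ruling out that $p'$'s angular sector at $i$ overshoots $\overrightarrow{ij}$ onto the interior side of $p$. It is precisely here that the three ingredients must be combined: (a) atomicity of arrangement faces, (b) the definition of $i$-wedges via angularly consecutive neighbors of $i$, and (c) the wedge constraint for the $i$-wedge $g$ on the opposite side of $e$. Once this angular argument is established at the endpoint $i$, both existence and uniqueness follow from the same template applied to $f$ or $g$ as appropriate.
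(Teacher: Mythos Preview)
Your proof is correct and follows essentially the same approach as the paper: both use the wedge constraints on the two $i$-wedges adjacent to $e$ at vertex $i$, together with atomicity and general position, to force the polygon covering the exterior wedge to have $e$ as an edge. Your version spells out more detail---why $i$ must be a vertex of $p'$, the angular-sector overshoot argument, and an explicit uniqueness step---whereas the paper compresses all of this into the single observation that the interior of the neighboring polygon cannot intersect the segment $\overline{ik}$ without also covering the opposite wedge.
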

\begin{proof}
Let $\{i,k\}$ be an edge of $p$ that is not an edge of $CH(P)$, and
$k-1$ be the predecessor of $k$ in $\CCW(i,(i_x, y_{max}+1)))$.
Suppose, w.l.o.g., that the $(k-1)$-st $i$-wedge is exactly-covered by $p$. As $u$ is a solution of Model~\ref{IPsp}, there must be a polygon $h$ in $U$ that covers the $k$-th $i$-wedge. Since we have assumed general position, there is no other edge in $E(P)$ that is supported by the straight line that supports $\{i,k\}$. Also, the interior of $h$ cannot intersect the segment $\{i,k\}$, since, otherwise, both $p$ and $h$ would be covering the $(k-1)$-st $i$-wedge,
violating~\eqref{IPsp_eq}. Therefore, $h$ shares the edge $\{i,k\}$ with $p$. The same holds if we exchange the roles of the $k$-th and the $(k-1)$-st $i$-wedges. Figure~\ref{fig:supportedges} illustrates the polygons in a solution and the $i$-wedges they cover.
\end{proof}
\begin{figure}[!htb]
	\centering
	\includegraphics[width=0.5\textwidth]{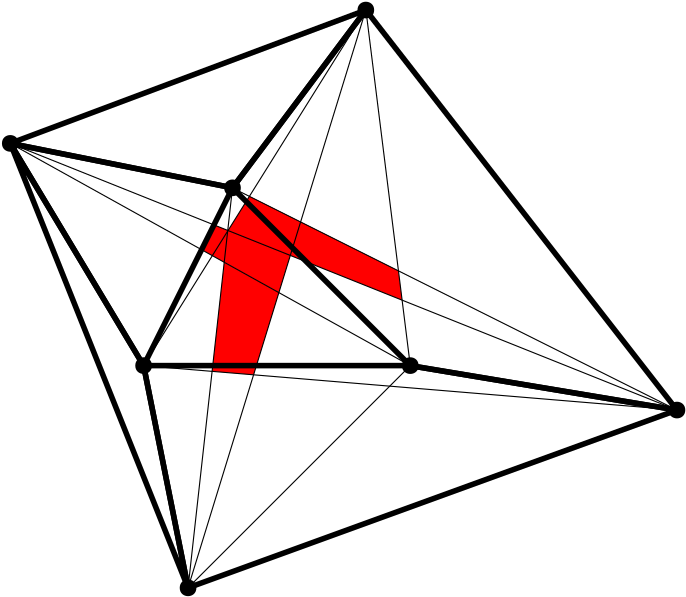}
	\caption{Example of a convex partition superposed over the arrangement of Figure~\ref{fig:arrang}. Edges that belong to the  polygons in the partition are heavier.\label{fig:supportedges}}
\end{figure}

This result helps us prove that constraints that do not correspond to $i$-wedges are, in fact, redundant, and can be removed from the model, as follows.
\begin{theorem}
Let $u$ be a solution of Model~\ref{IPsp} without the Constraints~\eqref{IPsp_eq} that do not correspond to $i$-wedges and $U=\{p \in \POLYSET(P): y_p=1\}$ be the corresponding set of polygons.
Then, $U$ covers each face of the arrangement $\ARRANG(P)$ exactly once.
\end{theorem}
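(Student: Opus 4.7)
The plan is to establish that $U$ partitions $CH(P)$ via a continuous counting argument. For each point $x \in \INTERIOR{CH(P)}$ that lies on no segment of $L(P)$, define $N(x) = |\{p \in U : x \in \INTERIOR{p}\}|$. Since every polygon in $\POLYSET(P)$ has all its edges in $L(P)$, the function $N$ is constant on the interior of each face $f \in \ARRANG(P)$, and for any $x$ in that interior we have $N(x) = \sum_{p \in U : f \subset p} 1$. Hence the theorem reduces to showing $N \equiv 1$. The hypothesis of the theorem already gives $N \equiv 1$ on the interior of every $i$-wedge, for each $i \in P$.

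The core of the argument is to show that $N$ is unchanged when $x$ crosses transversally a segment $e \in L(P)$ that lies in $\INTERIOR{CH(P)}$. Each polygon $p \in U$ with $e$ on its boundary lies entirely in exactly one of the two closed half-planes determined by the line through $e$ (by convexity), so its contribution to $N$ flips between $0$ and $1$ across such a crossing. Therefore the net change $\Delta N$ equals the number of polygons of $U$ incident to $e$ on the entering side minus the number on the leaving side. Since $e$ lies in $\INTERIOR{CH(P)}$, it is not an edge of $CH(P)$, so Lemma~\ref{lemma:edge} applies: whenever some $p \in U$ has $e$ on its boundary, there is exactly one other polygon $h \in U$ sharing $e$, and as shown in the lemma's proof $h$ sits on the opposite side of $e$ from $p$. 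Consequently the polygons of $U$ incident to $e$ form either $0$ or $2$ elements, one on each side, and $\Delta N = 0$.

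To finish, I would connect an arbitrary arrangement face to an $i$-wedge by a walk. Since $\INTERIOR{CH(P)}$ is pathwise connected and $L(P)$ is a finite union of segments, any two points of $\INTERIOR{CH(P)}$ avoiding $L(P)$ can be joined by a piecewise-linear curve that crosses $L(P)$ transversally at only finitely many segment-interior points, via a routine perturbation that avoids arrangement vertices and multiple simultaneous crossings. Picking an arbitrary face $f \in \ARRANG(P)$, a point $x \in \INTERIOR{f}$, and a reference point $x_0$ in the interior of some $i$-wedge, invariance along the curve yields $N(x) = N(x_0) = 1$; therefore $f$ is covered by exactly one polygon of $U$. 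The main technical obstacle is the crossing analysis, which depends crucially on Lemma~\ref{lemma:edge}'s placement of the matching polygon on the opposite side of $e$; without it, a single crossing could contribute $\pm 2$ and destroy the invariance of $N$.
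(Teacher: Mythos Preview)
Your argument is correct and takes a genuinely different route from the paper's proof. The paper argues by contradiction in two separate passes: it first assumes a maximal connected \emph{uncovered} region $R$, picks a boundary edge of $R$ supported by some $\{i,k\}\in E(P)$, and uses Lemma~\ref{lemma:edge} to produce a polygon of $U$ overlapping $R$; it then repeats the maximal-region argument for a \emph{multiply-covered} region. Your approach instead proves directly that the covering multiplicity $N$ is a global invariant: it is constant on arrangement faces, equals $1$ on every $i$-wedge by hypothesis, and does not change when a path in $\INTERIOR{CH(P)}$ crosses a segment of $L(P)$, because by Lemma~\ref{lemma:edge} the polygons of $U$ incident to that segment come in an opposite-side pair (or not at all). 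Both proofs hinge on the same lemma, but yours unifies the under- and over-covering cases into a single monodromy-style argument and avoids the somewhat delicate boundary analysis of maximal regions; the paper's version, in turn, is more self-contained in that it never needs to discuss transversality or perturbation of paths.

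One minor imprecision: you write ``since $e$ lies in $\INTERIOR{CH(P)}$, it is not an edge of $CH(P)$'', but a segment of $L(P)$ may have endpoints on $\partial CH(P)$ and thus not lie in the open interior. The correct justification is that your path stays in $\INTERIOR{CH(P)}$, so the crossing point itself is interior, which already forces $e\notin CH(P)$ and lets Lemma~\ref{lemma:edge} apply. This does not affect the validity of the argument.
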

\begin{proof}
First, we show that given a solution $u$ of Model~\ref{IPsp} without the Constraints~\eqref{IPsp_eq} that do not correspond to $P$-wedges, there are no uncovered regions in $CH(P)$.

Notice that the $i$-wedges are exactly-covered. Also, observe that non-exactly-covered regions are formed by unions and intersections of polygons in $U$. Thus, the boundary of each maximal connected non-exactly-covered region is comprised  of of line segments.

Let $R$ be a maximal connected region of $CH(P)$ comprised by the union of uncovered faces of $\ARRANG(P)$. Let $e$ be an edge of $R$ that does not lie on the boundary of $CH(P)$. Such edge always exists since each $i$-wedge is exactly-covered. Thus, $R$ cannot be equal to $CH(P)$. Since $P$ is in general position, $e$ is supported by exactly one edge in $E(P)$, say, $\{i,k\}$. Since $R$ is maximal, there is a face $b$ of the arrangement $\ARRANG(P)$ that is adjacent to $e$ on the other side of $e$ relative to $R$ and is covered at least once. Let $h \in U$ be one of the polygons covering $b$. As $h$ cannot cover any part of $R$, it must include edge $\{i,k\}$ on its boundary. By Lemma~\ref{lemma:edge}, there must be a polygon $g \in U$ that shares $\{i,k\}$ with $h$ on the same side of $e$ as $R$. Thus, $R \cap g$ is covered, which is a contradiction with the fact that $R$ is entirely not covered.

Since $U$ leaves no uncovered regions, we now focus on the proof that there is no region of $CH(P)$ that is covered by two or more
polygons.

Let $R$ be a maximal connected region of $CH(P)$ comprised by the union of faces of $\ARRANG(P)$ that are covered more than once by polygons in $U$. Let $e$ be an edge of $R$ that does not lie on the boundary $CH(P)$. Since $R$ is maximal, one side of $e$ is covered at least twice and the other is covered exactly once, as previously discussed, implying that there is a polygon $g$ in $U$ with an edge $\{i,k\}$ supported by $e$ on the same side as $R$. Let us assume, w.l.o.g., that the $(k-1)$-st $i$-wedge and $R$ are on the same side of the line supported by $\{i,k\}$, otherwise, we change the roles played by $i$ and $k$. Let $b$ be a face of $\ARRANG(P)$ in $R$ adjacent to $e$. Then, $b$ is covered by $g$ and there must be another polygon $h \in U$ that also covers $b$. Since we assumed general position, there is no edge other than $\{i,k\}$ in $E(P)$ that is supported by $e$. However, $h$ cannot have $\{i,k\}$ as one of its edges, since, otherwise, the $(k-1)$-st $i$-wedge would be covered by both $g$ and $h$. Also, since $h$ is empty (w.r.t.~points in $P$), it cannot contain $i$ or $k$ in its interior. This means that polygon $h$ cannot have $e$ on its boundary, and must cover a face $c$ adjacent to $e$ on the opposite side of $g$ relative to $\{i,k\}$. By Lemma~\ref{lemma:edge}, there is a polygon $p$ that shares $\{i,k\}$ with $g$ and covers $c$, implying that $c$ is covered at least twice. We conclude that $R \cup c$ is covered more than once, contradicting our assumption that $R$ is maximal.
\end{proof}

\section{Column Generation Algorithm for SPM}
\label{sec:colgen}

In this section, we address the issue of having an exponential number of variables in Model~\ref{IPsp}. We solved this difficulty by the use of Column Generation.

Instead of enumerating all the variables, we start by solving the
model with only a small subset of them and proceed by generating new ones when necessary. This approach, known as \emph{Column Generation}, guarantees that, if our procedure for generating columns with negative reduced cost is polynomial, the linear relaxation of Model~\ref{IPsp} can be solved in polynomial time\cite{GrotschelLS81}.

To solve the linear relaxation of an LP model using column generation, we iterate between solving the {\it restricted master problem} (RMP) and the {\it pricing problem}. The RMP is the original model restricted to a subset of variables. At each iteration, an RMP is solved to optimality and the pricing problem is used to find variables with negative (in the case of minimization problems) reduced cost that should be added to the RMP. This process stops when no variable has negative reduced cost. Preliminary studies with different set of initial polygons based on the initial solution didn't lead to a noticeable performance difference compared to all triangles. Thus, we start the RMP with the variables corresponding to all triangles.

Next, we show how to solve the pricing problem for Model~\ref{IPsp}.

By associating a vector of variables $\alpha$ with Constraints~\eqref{IPsp_eq}, we obtain the dual of the linear program corresponding to the relaxation of Model~\ref{IPsp}, shown in Model~\ref{IPsp_dual} below.
\begin{model}
\label{IPsp_dual}
	\begin{align}
	\max \ \ & \sum_{f \in \ARRANG}\alpha_f \label{IPsp_dual_obj}\\
	\text{s.a.} \ \ & \displaystyle \sum_{f \in \ARRANG(P): f \subseteq p}\alpha_f \le 1 & \forall p \in \POLYSET\label{IPsp_dual_le}\\
	& \alpha_f \in \mathbb{R} & \forall f \in \ARRANG(P) \label{IPsp_dual_var}
	\end{align}
\end{model}
\vspace{-2ex}

Let $\overline{\alpha}$ be an optimal dual solution of the RMP in a given iteration of the column generation procedure. We can express the reduced cost $\overline{c}_p$ of the variable corresponding to polygon $p \in \POLYSET(P)$ in Model~\ref{IPsp} as:

\begin{equation}
\overline{c}_p = 1-\sum_{f \in \ARRANG(P): f \subseteq p} \overline{\alpha}_f.
\end{equation}
In other words, to compute the reduced cost of a variable, we need to sum up the values of the dual variables associated with face constraints covered by the corresponding polygon.

To solve the pricing problem, we define a recurrence based on the idea of constructing polygons by joining triangles that share an edge.

Let $\Delta(k,l,m)$ denote the reduced cost of a triangle $(k,l,m) \in \POLYSET(P)$, with respect to the dual variables $\overline{\alpha}$, given by $\Delta(k,l,m)=\sum_{f \subseteq (k,l,m)} \overline{\alpha}_f$.

We now consider all polygons whose leftmost vertex $k$ is preceded in CCW order by vertices $l$ and $m$. W.l.o.g., we say that $k$ is the first vertex of such polygons, while $l$ and $m$ are their second-to-last and last vertices, respectively.
Since each of these polygons has an associated variable, let $B(k,l,m)$ denote the minimum reduced cost among these variables. We say that $(k,l,m)$ is the \emph{last triangle} of those polygons. See Figure~\ref{fig:bklm_polygon}.

\begin{figure}[!htb]
	\centering
	\includegraphics[width=0.3\textwidth]{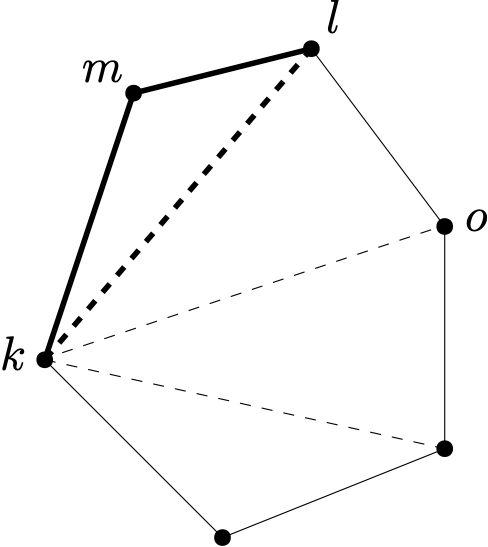}
	\caption{Example of a polygon whose last triangle is $(k,l,m)$.
	Dashed lines show how it can be decomposed into triangles with $k$ as their leftmost vertex, emphasizing the edges of $(k,l,m)$.\label{fig:bklm_polygon}}
\end{figure}

Let us denote $\CCW^-(k, (k_x,y_{max}+1))$ simply by $P_k$, notice that those are the points  to the right of $k$ sorted in CCW order. We use $o \le_{P_k} l$ to indicate that $o$ precedes $l$ in the sequence $P_k$. Recall that $\CONVEX(o,l,m)$ denotes that the sequence $(o,l,m)$ is positively oriented.

Then, $B(k,l,m)$ may 	be computed by following the recurrence formula:
	\begin{subnumcases}{\label{pricing_dp}B(k,l,m)=}
		\infty, \text{ if $m = l$ or $k_x > l_x$ or $m <_{P_k} l$ or $(k,l,m) \not\in \POLYSET$}\label{pricing_dp_invalid}\\
		\min_{\substack{o \in P_k:\\o <_{P_k} l\\ \CONVEX(o,l,m)}}\{0, B(k,o,l)\}+\Delta(k,l,m), \text{ otherwise}\label{pricing_dp_rec}
	\end{subnumcases}

To explain this formula, we define a polygon and a triangle to be \emph{compatible} if their union is a convex polygon that is empty w.r.t.~$P$.
Recurrence~\eqref{pricing_dp} has two cases.
Case~\eqref{pricing_dp_invalid} deals with invalid triplets of
points, while in case~\eqref{pricing_dp_rec} we look for a minimum cost polygon that is compatible with triangle $(k,l,m)$.

Now, we discuss how to solve this recurrence in polynomial time using dynamic programming. Recall that $|P|=n$.

There are $O(n^3)$ dynamic programming states, one for each
$B(k,l,m)$. The total time complexity for processing each state is the $O(n)$ time spent looking for the best compatible polygon, or deciding that $(k,l,m)$ is not empty, plus the time spent calculating $\Delta(k,l,m)$. Naively, computing $\Delta(k,l,m)$ for a single triangle would take, in the worst case, $O(n^4)$ time,
as it would require processing all the faces in the complete arrangement. The final complexity using this approach would
be $O(n^7)$.

However, we can take advantage of the structure of the set of
$P$-wedges to reduce the time to compute $\Delta(k,l,m)$ of a triangle to $O(1)$, with $O(n^2)$ preprocessing each time the dual variables change.

This is accomplished by observing that $\Delta(k,l,m)$ is the sum of a range of consecutive $i$-wedges in each of its vertices $i$, as shown in Figure~\ref{fig:iface_sum}. For each point $i \in P$, we can use a data structure capable of answering range sum queries in $O(1)$ time and which can be built in $O(n)$ time for a given set of dual variable values. Computing $\Delta(k,l,m)$ requires only three such queries. Now, for each of the $O(n^3)$ states, it takes $O(n)$ time to find the compatible polygon with the smallest reduced cost, or decide that $(k,l,m)$ is not empty, and $O(1)$ time to compute $\Delta(k,l,m)$, leading to a time complexity of $O(n^4)$.

\begin{figure}[!htb]
	\centering \includegraphics[width=0.6\textwidth]{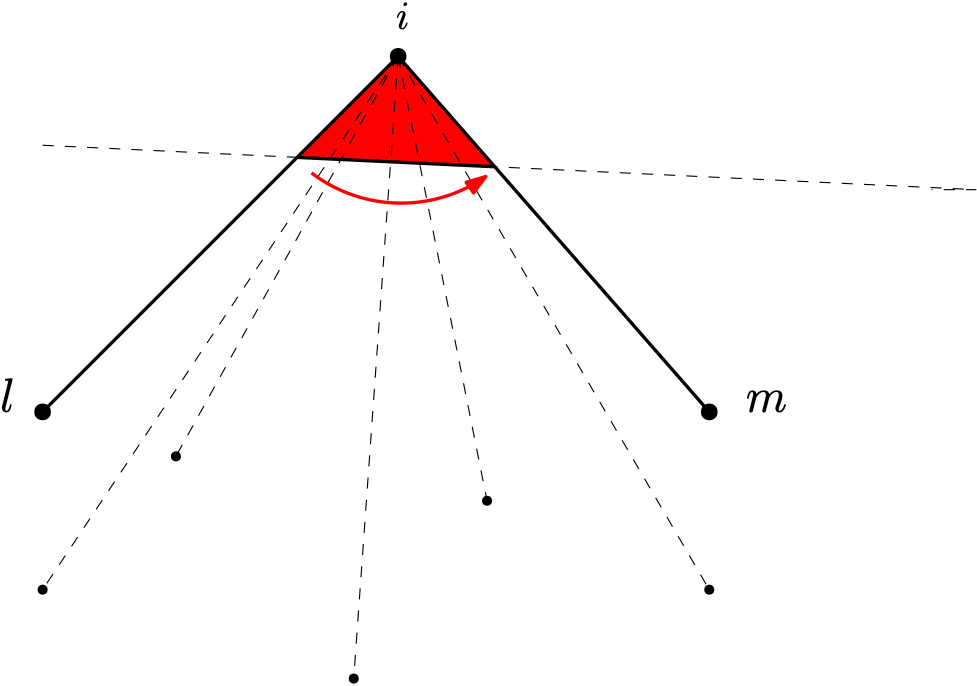}
	\caption{An example of the circular order of $i$-faces around a given vertex $i$. Each pair of consecutive points in a CCW ordering of $P\setminus\{i\}$ defines an $i$-wedge, highlighted in red.\label{fig:iface_sum}}
\end{figure}

To reduce the complexity even further, we adapt the angular sweeping technique presented by Avis and Rappaport~\cite{avis1985} to solve the Largest Empty Convex Polygon Problem and we achieve the same time complexity of $O(n^3)$ as they did for their problem.
Notice that by setting $\Delta(k,l,m)$ to $-1$, finding the polygon with minimum negative cost is the same as finding an empty convex polygon with maximum number of vertices. As done in \cite{avis1985}, we implement an algorithm to check whether a triangle is empty in $O(1)$ time per query and $O(n^3)$ preprocessing time using visibility graphs. However, since we need to solve this problem multiple times, we store all the empty triangles, increasing the space complexity from $O(n^2)$ to $O(n^3)$.

Finally, we present the pricing algorithm, starting with the
description of the angular sweeping technique.

Given $k$ and $l$, let $P_{lk}^+$ be the list $\CCW^+(l,k)$ restricted to the points in $P_k$ and, similarly, $P_{lk}^-$ be the list $\CCW^-(l,k)$ restricted to the points in $P_k$. We compute $B(k,l,m)$ for all possible $m$'s in $O(n)$ total time, using two pointers as follows: $m$ traverses $P_{lk}^-$ and $o$ goes over $P_{lk}^+$. See Figure~\eqref{fig:pricing}.

The procedure to compute the reduced cost $B(k,l,m)$ is presented in Algorithm~\ref{algo:pricing}. To simplify notation, if the triangle $(k,l,m)$ is not empty, we set $\Delta(k,l,m)=\infty$.

\begin{algorithm}
	\DontPrintSemicolon
	\KwIn{Point set $P$, lists of points $P_k$, $P_{lk}^+$ and $P_{lk}^-$}
	\KwOut{Minimum reduced cost $B(k,l,m)$}
	\For{$k \gets 1$ \KwTo $n$}{
		\For{$l \gets 1$ \KwTo $|P_k|$}{
			$o \gets 1$\;
			$bestValueO \gets 0$\;
			\For{$m \gets 1$ \KwTo $|P_{lk}^-|$}{

				\While{$o \le |P_{lk}^+|$ and $\CONVEX(P_{lk}^+[o],P_k[l],P_{lk}^-[m] )$}{
					$bestValueO \gets \min(bestValueO , B(k, P_{lk}^+[o], P_k[l] ))$\;
					$o \gets o+1$\;
				}
				$B(k,l,m) \gets bestValueO + \Delta(k, P_k[l], P_{lk}^-[m])$\;
			}
		}
	}
	\Return{B}\;
	\caption{{\sc Column Pricing Algorithm}}
	\label{algo:pricing}
\end{algorithm}

\begin{figure}[!htb]
    \centering\includegraphics[width=0.4\textwidth]{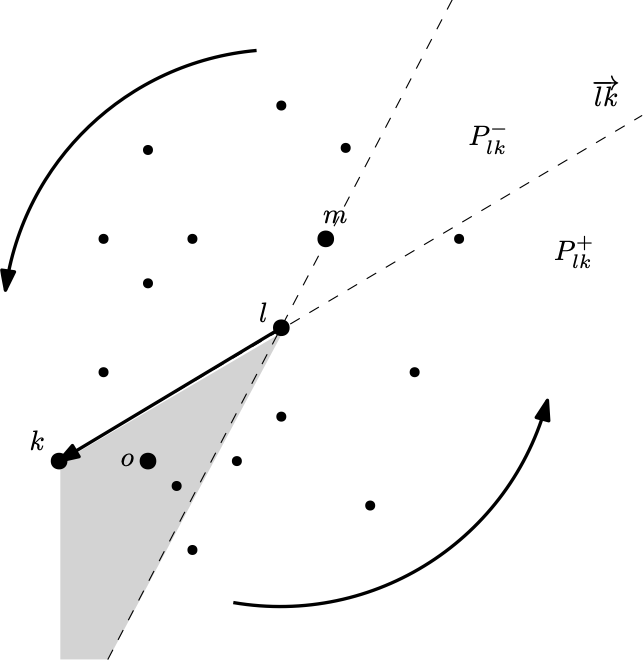}
	\caption{To illustrate the execution of the Column Pricing
	Algorithm \ref{algo:pricing}.
	consider how $P_k$ is
	divided into the lists $P_{lk}^-$ and
	$P_{lk}^+$ by the oriented line $\protect\overrightarrow{lk}$, as well as the movement of
	pointers $l$ and $o$. Points in the gray region are candidates for second-to-last point for polygons whose last triangle is $(k,l,m)$.}
    \label{fig:pricing}
\end{figure}

Algorithm~\ref{algo:pricing} works as follows. The loops defined by Lines 1 and 2 iterate over the leftmost and second-to-last points of each state, respectively. Pointers $m$ and $o$ are initialized to point to the beginning of their respective lists in Lines 3 and 5, and the best compatible polygon is set in Line 4 to be the empty polygon. The loop between Lines 5 and 9 moves pointer $m$, corresponding to the last vertex of the polygons being constructed, one step at the time. The other pointer $o$, corresponding to the third-to-last vertex, is handled by the loop between Lines 6 and 8:
$o$ moves forward as far as possible, while the angle $\measuredangle oml$ is convex and updates the best compatible polygon found so far. Finally, Line 9 computes the cost of $B(k,l,m)$.

To conclude that the total complexity is indeed $O(n^3)$, we need to observe that each one of the variables $k$, $l$ and $m$ always increases and at most $O(n)$ times. Also, each step is done in $O(1)$ time, including computing $\Delta(k,l,m)$, as previously stated.

Notice that reconstructing the polygon given by the dynamic programming table $B(k,l,m)$ takes linear time, since we actually need to find the set of faces that it covers. Doing that for each triplet $k,l,m$ would increase the overall complexity by a factor of $n$. To keep the complexity of the procedure at $O(n^3)$, we only consider the best polygon for each pair $k,l$, limiting the output to $O(n^2)$ polygons.


\section{Branch-and-Price}
\label{sec:bnp}

In the previous section, we discussed how to solve the pricing problem in polynomial time within a Column Generation framework, allowing for the linear relaxation of Model~\ref{IPsp} to be solved in polynomial time. In practice, the standard way to solve ILP models is to use a Branch-and-Bound algorithm based on linear relaxation. The combination of Branch-and-Bound and Column Generation is known as \emph{Branch-and-Price} (BNP)~\cite{barnhart98}.

In this section, we describe some details and the design choices we made to implement a Branch-and-Price algorithm for Model~\ref{IPsp}.

We also discuss how the addition of variables representing edges to the model leads to a specialized branching rule. Moreover, we elucidate how to find an initial viable solution, called an \emph{incumbent} solution, and how to use this solution to generate an initial set of columns for the RMP. A similar approach is used to find viable solutions at each node of the BNP tree. We conclude the section with additional implementation details that aim to improve the performance of the BNP algorithm in practice.

\subsection{Branch on Edges}
\label{sec:branch}
One of the challenges we face when implementing Branch-and-Price algorithms is that branching decisions are handled as additional constraints added to the newly created sub-problems. These constraints have their own dual variables and might change the pricing problem, when the variables being priced are involved.

The idea to introduce edge variables to the Set Partition Model~\ref{IPsp} comes from the Compact Model introduced in~\cite{barboza2019}. Let $S^C$ denote the set of pairs of segment of $E(P)$ that cross and $I(P)$ denote the set of points of $P$ in the interior of $CH(P)$. In the Compact Model~\ref{IPcm}, shown below, we associate an edge variable $x_e$ to each edge $e \in E(P)$.
\vspace{-1ex}
\begin{model}
\label{IPcm}
	\begin{align}
	\min \sum_{\{i,j\} \in E(P)}x_{ij} & & \label{IP_obj}\\
	\text{s.t.}\ \ \ \ \ \ \displaystyle x_{ij} + x_{k\ell} & \le 1 & \forall \{\{i,j\},\{k,\ell\}\} & \in S^c\label{IPcm_col}\\
	x_{ij} & = 1   & \forall \{i,j\} & \in CH(P)\label{IPcm_ch}\\
	\sum_{k \in \CCW^+(i,j)}x_{ik} & \ge 1 & \forall (i,j) \in A(P), i & \in I(P)\label{IPcm_ang}\\
	\sum_{j\in P}x_{ij} & \ge 3 & \forall i & \in I(P) \label{IPcm_deg}\\
	0 \le x_{ij} & \le 1 & \forall \{i,j\} & \in E(P)\label{IPcm_bounds}\\
	x_{ij} & \in \mathbb{Z}   & \forall \{i,j\} & \in E(P)\label{IPcm_integer}
    \end{align}
\end{model}
\vspace{-1ex}

In the Compact Model~\ref{IPcm}, edge crossings are avoided by Constraints~\eqref{IPcm_col}. Constraints~\eqref{IPcm_ch} ensure that the edges belonging to the convex hull of $P$ are part of the solution, while Constraints~\eqref{IPcm_ang} ensure that the angles incident to a internal vertex are all convex. Constraints~\eqref{IPcm_deg} are added to force every internal vertex to have degree at least three. Finally, Constraints~\eqref{IPcm_bounds} and~\eqref{IPcm_integer} guarantee that the variables are binary.

By including edges we can take advantage of the constraints and heuristic presented in~\cite{barboza2019}. The edges are also natural candidates for branching.

A common drawback of Branch-and-Bound algorithms is the possibility of unbalanced branching trees. When one of the branching choices is much more restrictive than the other, the least restrictive node can have almost no impact in the value of the optimal solution of the relaxation~\cite{Foster1976}. This happens, in particular, when branching on a single variable, which is the default for commercial solvers.

For partitioning problems, one of the most well known balanced branching approaches is the Ryan-Foster branching
rule~\cite{Foster1976}. According to this rule, we find a pair of constraints that are covered by distinct sets of variables but share at least one (fractional) variable, and branch on two possibilities: forcing those two constraints to be covered by the same variable or to be covered by two different variables.

In this section, we show that for Model~\ref{IPsp}, when the solution is fractional, there is always a pair of adjacent $i$-wedges that can be used for branching according to the Ryan-Foster rule. This branching can be interpreted geometrically as deciding whether a particular edge of $E(P)$ is part of the solution or not.

Consider edge $ik \in E(P)$ and assume that $k$ is preceded by $k-1$ and succeeded by $k+1$ in the CCW ordering around $i$. This edge has two corresponding arcs $(i,k), (k,i) \in A(P)$, one for each possible orientation. The polygons in $\POLYSET(P)$ can be split into four sets with respect to the arc $(i,k)$: the set of polygons $\POLYSET^+_{ik}(P)$ that are supported by $\{i,k\}$ and cover the $k$-th $i$-wedge; the set of polygons $\POLYSET^-_{ik}(P)$ that are supported by $\{i,k\}$ and cover the $(k-1)$-st $i$-wedge; the polygons $\POLYSET^{over}_{ik}(P)$ that cover both $(k-1)$-st and the $k$-th $i$-wedges; and the set of polygons $\POLYSET^{disj}_{ik}(P)$ that cover neither the $(k-1)$-st nor the $k$-th $i$-wedge. To simplify notation, we omit the point set $P$ from the notation of $\POLYSET$ and its subsets when the context makes it clear. Figure~\ref{fig:polysetsplit} illustrates this notation.
The same reasoning can be used for the reverse arc $(i,k)$, by exchanging the roles played by $i$ and $k$.

\begin{figure}[!htb]
    \centering 
	\includegraphics[width=0.3\textwidth]{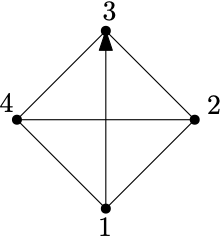}
	\caption{Example of an instance with four points inducing four triangles and one quadrilateral. Arc $(1,3) \in A(P)$ splits those polygons into four sets $\POLYSET^+_{13}=\{134\}$, $\POLYSET^-_{13}=\{123\}$, $\POLYSET^{over}_{13}=\{124, 1234\}$ and $\POLYSET^{disj}_{13}=\{234\}$
	\label{fig:polysetsplit}}
\end{figure}

Recall that the $i$-wedge supported by the edges $i(k-1)$ and $ik$ is the $k$-th $i$-wedge, while the $i$-wedge between edges $ik$ and $i(k+1)$ is the $(k+1)$-st $i$-wedge.

By denoting $\sum_{b \in B}u_b$ as $u(B)$ for a given set $B \subseteq \POLYSET(P)$, we can rewrite equation~\eqref{IPsp_eq} for those $i$-wedges as:
\begin{equation}
\label{eminusrw}
u(\POLYSET^-_{ik})+u(\POLYSET^{over}_{ik})=1
\end{equation}
\begin{equation}
\label{eplusrw}
u(\POLYSET^+_{ik})+u(\POLYSET^{over}_{ik})=1,
\end{equation}
respectively. Notice that equality $u(\POLYSET^+_{ik})=u(\POLYSET^-_{ik})$ follows from~\eqref{eminusrw} and~\eqref{eplusrw}.

With this in mind, we can extend Model~\ref{IPsp} by adding the set of binary variables $x_e$ for each edge $e \in E(P)$. Denoting that an edge $e$ is in the border of a polygon $p$ by $e \in p$, we also add constraints relating polygons and edges.

An edge
${ik}$
is in the solution if and only if there are two polygons supported by it, one on each side. Taking the orientation from $i$ to $k$, we can express this with the following equations:

\begin{equation}
\label{projEdgePolyMinus}
x_{ik} = u(\POLYSET^-_{ik})
\end{equation}
\begin{equation}
\label{projEdgePolyPlus}
x_{ik} = u(\POLYSET^+_{ik}).
\end{equation}

However, since $u(\POLYSET^-_{ik})=u(\POLYSET^+_{ik})$, we can drop one of the constraints or combine both to simplify notation, leading to the following extended Model~\ref{IPext}.
\begin{model}
\label{IPext}
	\begin{align}
	\min \ \ & \sum_{p \in \POLYSET}u_p \tag{\ref{IPsp_obj}}\\
	\text{s.a.} \ \ & \displaystyle \sum_{p \in \POLYSET: f \subseteq p}u_p = 1 & \forall f \in \ARRANG\tag{\ref{IPsp_eq}}\\
	& 2x_e = \sum_{p \in \POLYSET: e \in p} u_p & \forall e \in E(P)\label{IPextProj}\\
	& u_p \in \{0,1\} & \forall p \in \POLYSET \tag{\ref{IPsp_var}}\\
	& x_e \in \{0,1\} & \forall e \in E(p) \label{IPext_varx}
	\end{align}
\end{model}

By branching on an edge variable $x_e$, we are deciding whether the two consecutive $i$-wedges are going to be covered by the same polygon or by different ones. This is analogous to the Ryan-Foster branch rule, as previously discussed.

Lastly, we now show that we can branch only on edge variables.

\begin{lemma}
	Let $(u^*, x^*)$ be an optimal solution of the linear relaxation of Model~\ref{IPext}. Then, there is an edge $e \in E(P)$ such that $x^*_e$ is fractional if and only if there is a polygon $p \in \POLYSET$ such that $u^*_p$ is fractional.
\end{lemma}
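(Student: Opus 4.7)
The plan is to prove the equivalence by contraposition in both directions; throughout, let $(u^*, x^*)$ be a feasible solution of the linear relaxation of Model~\ref{IPext}.

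For the forward direction, suppose every $u^*_p \in \{0,1\}$. As derived just before~\eqref{eminusrw} and~\eqref{eplusrw}, the face equations~\eqref{IPsp_eq} for the two $i$-wedges adjacent to an edge $ik$ read $u^*(\POLYSET^-_{ik})+u^*(\POLYSET^{over}_{ik})=1$ and $u^*(\POLYSET^+_{ik})+u^*(\POLYSET^{over}_{ik})=1$. Each summand is a non-negative integer with sum $1$, hence $u^*(\POLYSET^-_{ik})=u^*(\POLYSET^+_{ik})\in\{0,1\}$. Then~\eqref{IPextProj} gives $2x^*_{ik}=u^*(\POLYSET^-_{ik})+u^*(\POLYSET^+_{ik})\in\{0,2\}$, so $x^*_{ik}$ is integer.

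For the converse, assume every $x^*_e\in\{0,1\}$ and let $E^*=\{e\in E(P):x^*_e=1\}$, which (together with the hull edges) induces a planar subdivision $F^*$ of $CH(P)$. My first observation is that any polygon $p$ with $u^*_p>0$ has $\partial p\subseteq E^*$: for each $e\in\partial p$, constraint~\eqref{IPextProj} gives $2x^*_e=\sum_{q:\,e\in\partial q}u^*_q\ge u^*_p>0$, which forces $x^*_e=1$.

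The key step is to show that no such $p$ can contain any edge of $E^*$ in its interior. Suppose otherwise, and let $e\in E^*$ lie in the interior of $p$; then both faces $f_1,f_2\in F^*$ adjacent across $e$ are fully contained in $p$, since the interior of a face of $F^*$ is disjoint from $E^*$ and therefore cannot meet the $E^*$-boundary of $p$. Partition the polygons $q\in\POLYSET$ with $u^*_q>0$ according to which of $f_1,f_2$ they cover into sets $A$ (both), $B_1$ ($f_1$ only) and $B_2$ ($f_2$ only); observe that $p\in A$ and that $B_1\cup B_2$ is precisely the set of polygons having $e$ on their boundary. The face constraints~\eqref{IPsp_eq} for $f_1$ and $f_2$ yield $u^*(A)+u^*(B_1)=1$ and $u^*(A)+u^*(B_2)=1$, while~\eqref{IPextProj} at $e$ gives $u^*(B_1)+u^*(B_2)=2x^*_e=2$. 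Solving this three-equation system forces $u^*(A)=0$, contradicting $u^*_p>0$.

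It follows that every polygon with $u^*_p>0$ lies inside a single face $f^*\in F^*$, with its boundary tracing a closed curve inside the simple cycle $\partial f^*$, hence $p=f^*$ and $f^*$ is convex. The face constraints~\eqref{IPsp_eq} then reduce to $u^*_{f^*}=1$ for each $f^*\in F^*$ and $u^*_p=0$ for all other polygons, which yields integrality of $u^*$. I expect the subtlest part to be the three-class argument above, in particular verifying that $B_1\cup B_2$ exhausts the polygons having $e$ on their boundary and that $f_1,f_2$ are fully contained in $p$---both relying on the key geometric fact that a face of $F^*$ cannot straddle the boundary of a polygon whose own sides are all $E^*$-edges.
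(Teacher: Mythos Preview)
Your forward direction (by contraposition) coincides with the paper's.

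For the converse the paper argues \emph{directly}: starting from a polygon $p$ with $u^*_p$ fractional, it picks an edge $e'=\{i',k'\}$ of $p$ and splits on whether $u^*(\POLYSET^{over}_{e'})>0$. If so, the two wedge equations \eqref{eminusrw}--\eqref{eplusrw} together with \eqref{projEdgePolyMinus} already force $x^*_{e'}$ to be fractional. If not, a second polygon $h\neq p$ with fractional weight covers the same $i'$-wedge; walking along $\partial p$ until one reaches an edge $e$ of $p$ that is not an edge of $h$ puts $p\in\POLYSET^-_{e}$ and $h\in\POLYSET^{over}_{e}$ simultaneously, and \eqref{eminusrw} with \eqref{projEdgePolyMinus} again yields a fractional $x^*_e$. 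This is a short, purely local argument that never needs to assemble a global subdivision.

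Your contrapositive route---building $F^*$ from $E^*=\{e:x^*_e=1\}$, showing every positively weighted polygon has its boundary in $E^*$ and no $E^*$-edge through its interior, and hence coincides with a face of $F^*$---is a correct and more structural alternative; it even exhibits $u^*$ explicitly as the indicator of a convex partition. Two points deserve care when you expand it. First, ``let $e\in E^*$ lie in the interior of $p$'' is slightly off: the endpoints of $e$ are in $P$ and $p$ is empty, so $e$ can only \emph{intersect} the interior of $p$; you should take $f_1,f_2$ adjacent to a sub-segment of $e$ lying inside $p$, after which your three-equation system goes through unchanged. Second, the ``face constraints \eqref{IPsp_eq} for $f_1,f_2$'' are not literally constraints of the model, since $f_1,f_2$ are $F^*$-faces rather than arrangement faces; you obtain them by applying \eqref{IPsp_eq} to an arrangement face inside each $f_i$ and invoking the fact that any $q$ with $u^*_q>0$ has $\partial q\subseteq E^*$, so it covers an $F^*$-face entirely or not at all---precisely the ``key geometric fact'' you already flag.
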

\begin{proof}
({\sc counterpositive proof of $\Rightarrow$}): observe that if $u^*_p$ is integral for all $p$, then, if $e$ is any edge in $E(P)$, we either have $u^*(\POLYSET^{over}_{e})=0$, which implies $x_{e}=1$, or $u^*(\POLYSET^{over}_{e})=1$, which ascertains that $x_{e}=0$.

\noindent
({\sc direct proof of $\Leftarrow$}):
Let $p$ be a polygon such that $u^*_p$ is fractional and $e'=\{i',k'\}$ one of its edges. Assume w.l.o.g.~that $p$ covers the $(k'-1)$-st $i'$-wedge, i.e., $p \in \POLYSET^{-}_{e'}$.

\ Case 1: If $u^*(\POLYSET^{over}_{e'}) > 0$ then, as $u(\POLYSET^-_{e'}) > 0$, from Equations~\eqref{eminusrw} and~\eqref{eplusrw}, we have that $u(\POLYSET^+_e)$ and $u(\POLYSET^-_e)$ must both be fractional. Thus, from $0 < u(\POLYSET^+_e)+u(\POLYSET^-_e) < 2$ and~\eqref{IPextProj} we get that $x_e$ is fractional.

\ Case 2: If $u^*(\POLYSET^{over}_{e'}) = 0$, then there is a polygon $h\neq p$ that also covers the $(k'-1)$-st $i'$-wedge such that $0 < u^*_h < 1$.
We can then traverse the edges of $p$ in clockwise order, starting at $e'$, until we find an edge $e=\{i,k\}$ that belongs to $p$ but not to
$h$. Assume w.l.o.g. that $h$ covers both the $(k-1)$-st and the $k$-th $i$-wedge (see Figure~\ref{fig:frac_proof}), otherwise, just
exchange the roles of $p$ and $h$. By construction, $p \in \POLYSET^{-}_{e}$ and $h \in \POLYSET^{over}_{e}$ and the corresponding variables $u_p$ and $u_h$ are both positive.
Therefore, by Equation~\eqref{eminusrw}, $u(\POLYSET^-_e)$ is fractional and, by Equation~\eqref{projEdgePolyMinus}, $x_e$ is also fractional.
\end{proof}

\begin{figure}[!htb]
	\centering\includegraphics[width=0.6\textwidth]{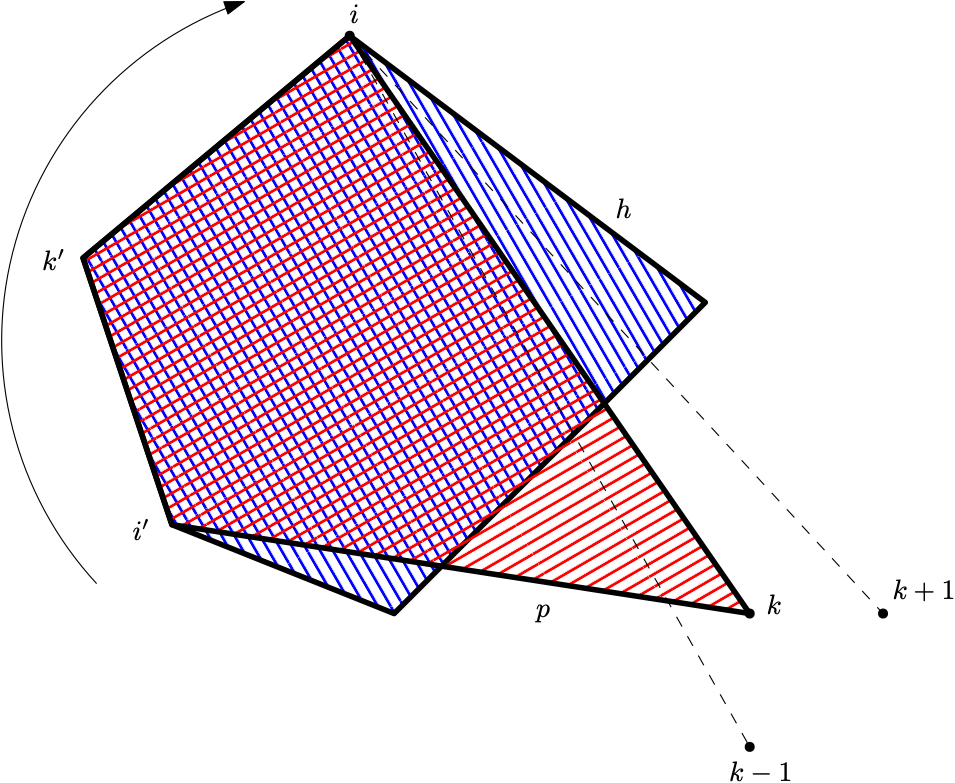}
	\caption{Example of two distinct polygons $p$ and $h$ that share edges. Edge $ik$ belongs only to $p$, but point $i$ belongs to both polygons. Polygon $h$ covers both the $(k-1)$-st and the $k$-th $i$-wedge while $p$ only covers the $k$-th $i$-wedge.\label{fig:frac_proof}}
\end{figure}

Hence, the following Corollary is immediate.
\begin{corollary}
	If $(u^*, x^*)$ is a fractional solution of Model~\ref{IPext}, then there is a pair of adjacent $i$-wedges that induce a Ryan-Foster Branching represented by an edge.
\end{corollary}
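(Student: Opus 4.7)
The plan is to derive the corollary directly from the preceding lemma, by identifying, for an arbitrary fractional solution $(u^*,x^*)$, a specific pair of adjacent $i$-wedges whose set-partition constraints fit the Ryan--Foster template and whose ``same polygon / different polygons'' dichotomy is exactly encoded by one edge variable $x_e$.

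First, I would invoke the lemma to obtain an edge $e=\{i,k\}\in E(P)$ with $x^*_e$ fractional: if some $u^*_p$ is fractional, the lemma gives such an $e$ directly; if instead every $u^*_p$ is integral, then $(u^*,x^*)$ being fractional forces some $x^*_e$ to be fractional. In either case we may assume $0 < x^*_e < 1$.

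Next, I would fix the two adjacent $i$-wedges naturally associated with $e$, namely the $(k-1)$-st and the $k$-th $i$-wedges, whose Model~\ref{IPsp} constraints are equations~\eqref{eminusrw} and~\eqref{eplusrw}. These two constraints are covered by the distinct variable sets $\POLYSET^-_{ik}\cup \POLYSET^{over}_{ik}$ and $\POLYSET^+_{ik}\cup \POLYSET^{over}_{ik}$, which share exactly $\POLYSET^{over}_{ik}$. From equation~\eqref{IPextProj} I have $2x^*_e = u^*(\POLYSET^-_{ik})+u^*(\POLYSET^{over}_{ik}) + u^*(\POLYSET^+_{ik}) - u^*(\POLYSET^{over}_{ik})$ after combining with~\eqref{eminusrw}--\eqref{eplusrw}, so $x^*_e$ fractional forces $u^*(\POLYSET^{over}_{ik})$ to be strictly between $0$ and $1$, and likewise $u^*(\POLYSET^-_{ik})$ and $u^*(\POLYSET^+_{ik})$ strictly between $0$ and $1$. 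Hence there is at least one fractional variable shared by the two constraint sets (any $p\in \POLYSET^{over}_{ik}$ contributing to the fractional mass), which is precisely the condition required by the Ryan--Foster rule.

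Finally, I would verify that the Ryan--Foster dichotomy on these two constraints coincides with the edge-branching $x_e\in\{0,1\}$. Forcing the $(k-1)$-st and the $k$-th $i$-wedges to be covered by the same polygon means selecting a variable from $\POLYSET^{over}_{ik}$, which by~\eqref{eminusrw} forces $u(\POLYSET^-_{ik})=0$ and hence $x_e=0$ by~\eqref{projEdgePolyMinus}; forcing them to be covered by distinct polygons means $u(\POLYSET^{over}_{ik})=0$, hence $u(\POLYSET^-_{ik})=1$ and $x_e=1$. Thus the Ryan--Foster branching on the wedge pair is represented exactly by the edge variable $x_e$, completing the proof. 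The only subtle step is the equivalence in the last paragraph; once the lemma is available, the rest is essentially a rewriting of~\eqref{eminusrw}, \eqref{eplusrw} and~\eqref{projEdgePolyMinus}.
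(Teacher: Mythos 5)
Your proof is correct and follows the same route the paper intends: the paper declares the corollary immediate from the preceding lemma together with the already-established identities \eqref{eminusrw}, \eqref{eplusrw} and \eqref{IPextProj}, and your write-up simply makes that routine algebra explicit (in particular, $x^*_e = 1 - u^*(\POLYSET^{over}_{ik})$, so a fractional edge variable yields a fractional shared polygon variable for the two adjacent wedges, and the $x_e\in\{0,1\}$ dichotomy matches ``same polygon / different polygons''). No substantive difference from the paper's argument.
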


Notice that adding variables is not a necessary step to implement the branching rule. We can simply add the corresponding constraints, replacing $x_e$ by zero or one, to the respective child nodes.
However, explicitly adding the variables facilitates the implementation when using a commercial solver as it can take advantage of complex single variable branching rules already in place such as strong branching.

The addition of Constraint~\eqref{IPextProj} affects the pricing algorithm and it needs to be handled explicitly. On the other hand, the actual branching decisions, which add the constraints $x_e=0$ or $x_e=1$ to the child nodes, do not involve polygon variables explicitly. This allows for a slightly modified pricing algorithm that does not change as the branch-and-price progresses.

Let $\beta_e$ be the set of dual variables associated with Constraints~\eqref{IPextProj}. The new reduced costs for the polygon variables in Model~\ref{IPext} are:

\begin{equation}
\overline{c_p} = 1-\sum_{f \in \ARRANG: f \subseteq p}\alpha_f-\sum_{e \in p}\beta_e \mbox{\hspace{2cm}} \forall p \in \POLYSET.
\end{equation}

Now, we need to slightly modify the pricing algorithm to accommodate the addition of these constraints. The reduced cost of a triangle $(k,l,m)$ is set to:
\begin{equation}
\Delta(k,l,m) = \beta_{kl}+\beta_{lm}+\beta_{km}+\sum_{f \subset (k,l,m)} \alpha_f.
\end{equation}

Considering that the $x$ variables represent support edges for the polygons, when building the polygon as the union of a triangle and a compatible polygon, the cost of the shared edge needs to be subtracted twice.

Notice that by setting an edge variable $x_e$ to one, we implicitly forbid any edge that crosses it to be part of the solution. Hence, when making this branching decision, we also explicitly set all crossing edges to zero in the corresponding sub-problems. Besides, to prevent polygons with forbidden edges to be generated, we set the dual cost associated with the constraints corresponding to those edges to an arbitrarily large number.

{Instead of using the default branching rules, we implemented a simple one based on the geometry of the problem. To that end, we say that a variable $x$ in a given solution is \textit{more fractional} the closer the value of $\mathit{frac}(x) = |0.5-x|$ is to zero.
Among all edge variables $x_e$ such that $\mathit{frac}(x_e)$ is within $0.1$ of the most fractional edge variable, we branch on the one whose corresponding edge has the highest number of crossings with all other edges in $E(P)$. This branching rule turned out to be very fast when compared to the default Strong Branching~\cite{achterberg2007} implemented in SCIP.}

\subsection{Primal Heuristic}
\label{sec:heur}
At the end of each BNP node, after pricing is finished, we use the edge variables to construct a greedy triangulation and solve an instance of the MCPP restricted to a small subset of edges, as described in~\cite{barboza2019}.

To build the greedy triangulation $\Delta$, the edges of $E(P)$ are sorted by descending value of their corresponding variables in the current LP solution of the RMP and inserted, if possible, in $\Delta$, following this ordering. An edge is not inserted when it crosses one of the edges previously added to $\Delta$. The result is an inclusion maximal set of non-crossing edges, which characterizes a triangulation.

A viable integer solution is then obtained by solving the MCPP restricted to the edges of this greedy triangulation and its flip edges. This heuristic runs very fast, taking less than 2 seconds for instances of up to 100 points.

It is possible that the solution found by an edge based heuristic contains polygons that have not yet been included in the RMP. This is addressed by running a secondary pricing algorithm whose purpose is solely to add the variables corresponding to the polygons present in heuristic solutions to the RMP.

\subsection{Initial Primal Solution}
To find an initial primal solution solution, we use the same heuristic as in the previous section. However, since no LP solution is available in the beginning, we replace the greedy triangulation with the Delaunay Triangulation.

\subsection{Lower Bound and Early Stopping}
At iteration $t$, the optimal solution of the current RMP $z^t$ is not guaranteed to be a lower bound for Model~\ref{IPext} unless no negative reduced cost variables were found by the pricing algorithm.

When an upper bound on the sum of the variables being priced $\kappa \ge \sum_{p \in \POLYSET}u_p$ is known, a lower bound for the Complete Model~\ref{IPext} can be computed before pricing is finished. If $\overline{c^t}$ denotes the most negative reduced cost obtained during the $t$-th iteration of the pricing algorithm, a lower bound for Model~\ref{IPext} is given by $z^t + \kappa\overline{c^t}$. Since the objective function of Model~\ref{IPext} is $\sum_{p \in \POLYSET}u_p$, the value of any viable integer solution can be expressed as $\kappa$. However, this lower bound is poor, possibly even negative, during the first iterations. According to~\cite{lubbecke05}, a tighter bound $\underline{z^t}$ for the particular case of unitary cost objective functions, which does not depend on the quality of the incumbent solution, is given by:
\begin{equation}
\label{unitdualbound}
\underline{z^t} = \frac{z^t}{1-\overline{c^t_p}}.
\end{equation}

Since all the coefficients in Model~\ref{IPext} are integers, knowing a lower bound for $z^t$ allows for an early halt of the column generation procedure. If $\lceil z^t \rceil = \lceil {\underline z^t} \rceil$, the integer lower bound at the current node cannot be improved by solving the RMP to optimality, and we can proceed directly to branching.

\subsection{Stabilization}
\label{sec:stab}
The lower bound given by~\eqref{unitdualbound} can oscillate between iterations and its convergence to the optimal value of the relaxation might be slow. Improving this convergence can significantly reduce total solving time of the ILP. This can be accomplished by the use of Stabilization techniques.

We can employ the dual bound given by~\eqref{unitdualbound} to assess the quality of a dual solution, where a higher lower bound indicates a better solution. One way to stabilize the algorithm is to minimize drastic changes in the dual solution, keeping it close to a known good solution, also called the \textit{stabilization center}. An in-depth discussion of the topic and different techniques to address the issue can be found in~\cite{pessoa13}.

The dual solution can be stabilized by applying the smoothing technique proposed by Wentges~\cite{wentges1997}. Instead of using the current dual solution $\alpha^t$ in the pricing subroutine, the following convex combination can be considered
\begin{equation}
\alpha^t_{STAB} = \alpha^t+\lambda(\alpha_{BEST}-\alpha^t),
\end{equation}
\noindent
where $0 \le \lambda < 1$ is the smoothing factor and $\alpha_{BEST}$ is the stabilization center corresponding to the dual solution with best lower bound found so far.
{We remark that tuning the parameter for this stabilization was very hard to accomplish, since instances would perform significantly better or worse as the parameter changed, averaging out little change.}

Another way to reduce oscillations is to solve the linear relaxations using barrier methods instead of the commonly used Simplex algorithms. By changing the LP algorithm, we can take advantage of the fact that barrier methods find solutions that lie in the center of the optimal face, as opposed to the extreme points encountered by Simplex. In highly degenerate problems, extreme points can oscillate significantly between iterations with the addition of new columns and/or rows, while subsequent central points are uniquely defined and close together. Another advantage of this approach is that no parameter tuning is necessary\cite{munari13}.

On the other hand, when replacing Simplex with barrier methods, we lose the capability of fast re-optimization after branching, one of the most important features of Dual Simplex algorithms explored when implementing branch and cut procedures\cite{wolsey}. Also, due to the nature of the central solutions found by barrier methods, it is very unlikely that they are integral, increasing the need of a good primal heuristic.

In pure branch-and-cut algorithms, when branching is done or violated cutting planes are found, the addition of new constraints makes the current primal solution infeasible while maintaining its dual feasibility. In this case, the use of Dual Simplex as the LP algorithm to optimize the linear relaxations in the child nodes is recommended because, starting from the current dual feasible basis, it usually requires far fewer iterations to reach a new optimal solution than it would be necessary for a Simplex algorithm started from scratch.

Pricing works similarly, however, the addition of columns makes the current dual solution infeasible while maintaining primal feasibility. So, when combining both row and column generation as in a Branch-and-Price algorithm, if a single Simplex algorithm is used to compute relaxations, some re-optimization is inevitable, reducing the negative impact of switching to barrier methods.

In Section~\ref{sec:exp} we show that, for this particular model, the more stable pricing procedure obtained, by replacing Simplex with a Barrier Method, out-weights the re-optimization cost.

\subsection{Degree Constraints}
\label{sec:degcons}
The addition of edge variables allows for the inclusion of the following degree constraints~\eqref{IPcm_deg} introduced in the Compact Model~\ref{IPcm} for point sets in general position:

\begin{equation}
\sum_{j \in P\setminus\{i\}}x_{ij} \ge 3, \forall i \in I(P)
\label{eq:deg-const}
\end{equation}

Notice that these constraints do not involve polygon variables and, therefore, do not require modifications to the pricing problem.

In practice, we verified that the addition of degree constraints improved the quality of the lower bound provided by the Set Partition Model, and the total solving time. However, when using column generation, the  simple  addition of those constraints resulted in a larger number of calls to the column generation procedure, considerably increasing solving times. 

To minimize this negative side effect, we separate the degree constraints after column generation rather than adding them all at once, despite the fact that there are only $O(n)$ of them. The embedding of a cutting plane procedure in the Branch-and-Price algorithm leads to a Branch-Cut-and-Price algorithm. Also, in our implementation, in an attempt to limit the number of pricing rounds, we require that an inequality be violated by at least $0.1$ units to be inserted in the current linear relaxation. As shown in~\cite{barboza2019}, the degree constraints can increase the lower bound by 0.5 even in very simple instances that, after rounding, may be just enough to assert a known primal solution as optimal, halting the optimization sooner.

\section{Experimental Results}
\label{sec:exp}

In this section, we show the positive impact of some design choices and compare Model~\ref{IPsp} with Model~\ref{IPcm}, presented in~\cite{barboza2019}.

All experiments were run on an Intel Xeon Silver 4114 at 2.2Ghz, and 32GB of RAM running Ubuntu 16.04.
Models and algorithms were implemented in C++ v.11 and compiled with gcc~5.5. Geometric algorithms and data structures were implemented using CGAL~5.1\cite{cgal}, using Gmpq for exact number representation.
The compact model introduced in~\cite{barboza2019} was implemented using CPLEX~12.10, while the set-partition Model~\ref{IPext} used SCIP~7.0\cite{scip} with CPLEX as LP solver. A time limit of \asbC{3 hours} was set for the ILP solver for each instance.

When running times are presented, we consider both the time to generate and to solve the model. Most of the data are presented in a standard boxplot, grouped by size. All data used to generate the figures is available at \cite{bsr2018}.

To compare the algorithms, we employ the instances from~\cite{barboza2019}, available at~\cite{bsr2018}. Those instances had been generated by independently sampling $x$ and $y$ coordinates from a uniform distribution, ensuring general position.

\asbC{In this study, we only use the instances of 65 to 105 points, with 30 instances per size, since smaller instances were too easy, while larger ones were too hard, given our limit of \cidC{3 hours} of (exclusive) solver time.
As we were pushing our models to their limit, we reached, for size 105, a large enough instance size for which multiple failures began to appear.
For instances of 105 points, when more than one instance could not be solved by a given configuration, the solving times for this size are omitted.
Despite the fact that a few of the instances could not be solved to optimality, we include them as clear outliers in some of the forthcoming figures.
Notice that the most basic configuration of Model \ref{IPsp} solved all instances of 65 points in at most 242 seconds, see Figure \ref{fig:timeT2full}, while the compact Model \ref{IPcm} failed to solve any of them in 2400 seconds to provable optimality.}

The main reason for the difference in performance is the quality of the lower bound provided by the relaxations of both models. See Figure~\ref{fig:difCompactFull}.
\begin{figure}[!htb]
\vspace{-3ex}
    \centering\includegraphics[width=\textwidth]{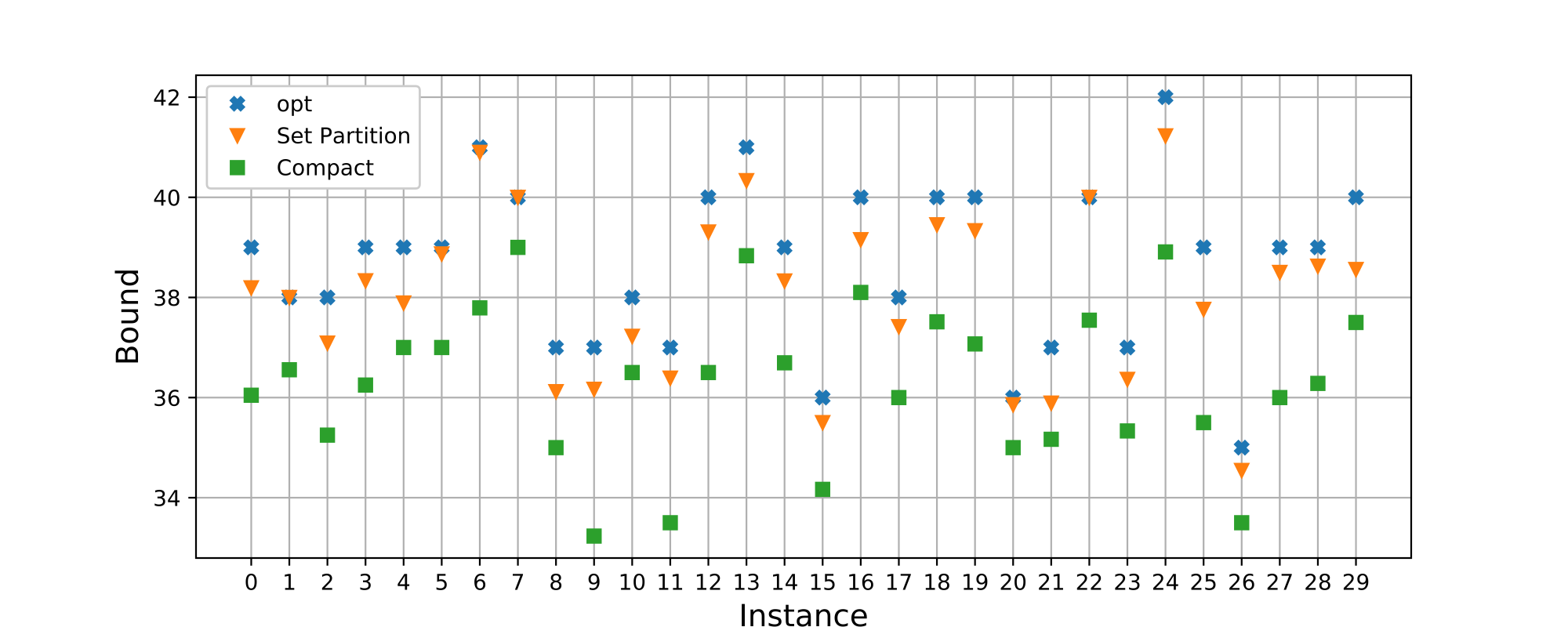} 
	\caption{Lower bound of the relaxations given by the Compact Model and the Full Set Partition Model for instances of size 55 compared with the optimal solution.\label{fig:difCompactFull}}
\end{figure}

\asbC{In the first three experiments, we show the impact of improvements to the Model~\ref{IPsp} with all polygon variables, which we call \textit{\fullSP}, discussed in Section~\ref{sec:bnp}: branching on edges, heuristic, and degree constraints. Our next experiment focuses on Model~\ref{IPext} using column generation, which we call \textit{\CGSP}, comparing the different types of stabilization methods and their impact. We conclude the experiments by comparing the memory consumption of \fullSP\ and \CGSP. Each experiment includes the features introduced in the previous ones.}

\paragraph{Branch on Edge Variables}
We now discuss the impact of introducing edge variables and the branching rule described in Section~\ref{sec:branch} to the basic \fullSP\ model, leading to Model \ref{IPext}.
Figures~\ref{fig:timeT2full} and ~\ref{fig:timeT2edges} show the solving times, respectively, with and without the inclusion of edge variables, while Figures~\ref{fig:nodesT2full} and~\ref{fig:nodesT2edges} depict the number of nodes explored.
Although there is an increase in the number of nodes explored, our implementation is faster than the one based on Strong Branch~\cite{achterberg2007}, the default branching rule for SCIP, leading to better total running times.

\begin{figure}[hbt]
	\centering
	\subfloat[\fullSP\label{fig:timeT2full}]{{\includegraphics[width=.45\textwidth]{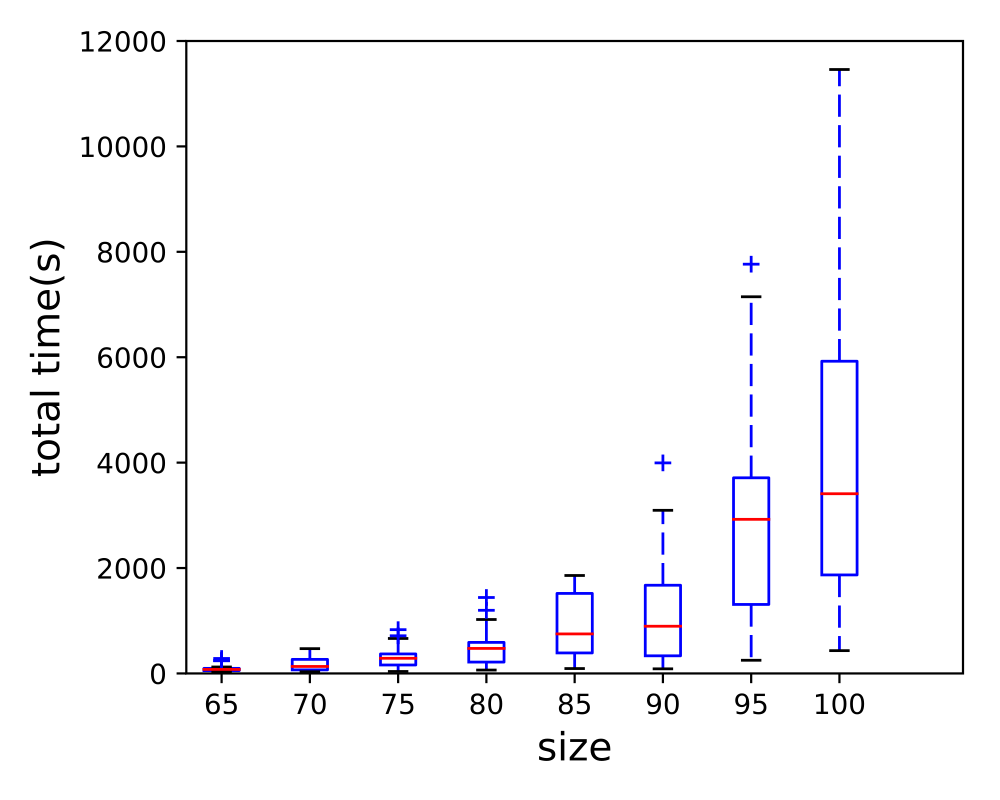}
	}}
	\qquad
	\subfloat[\fullSP+EdgeBranching\label{fig:timeT2edges}]{{\includegraphics[width=.45\textwidth]{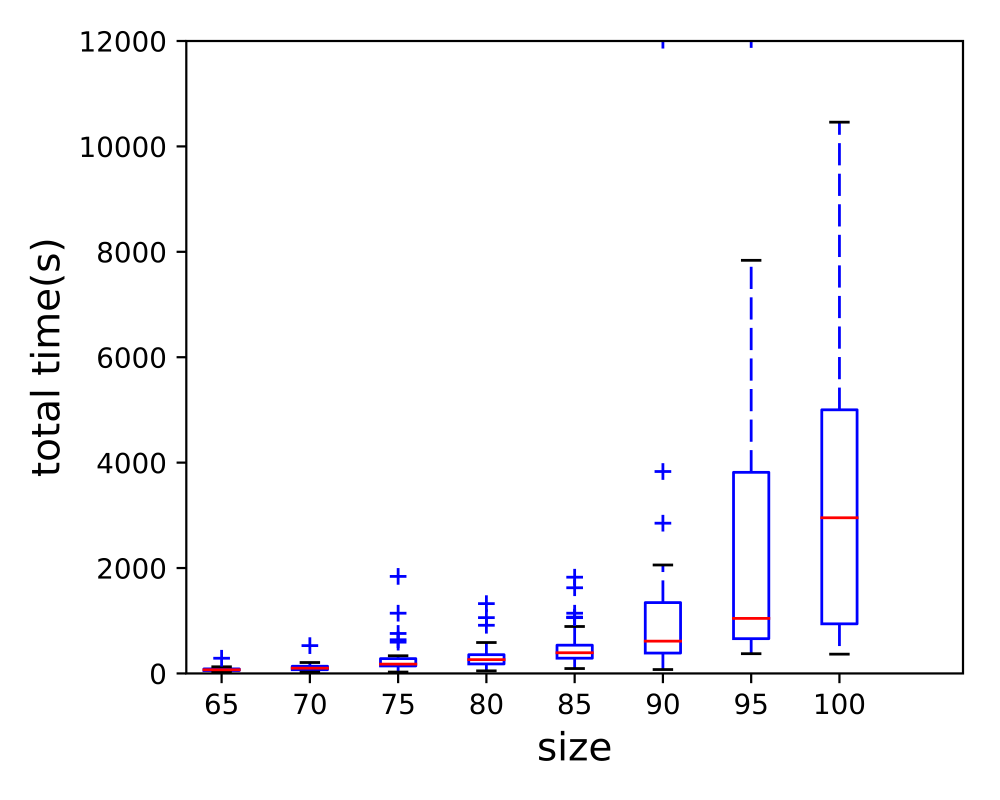}}}%
	
	\subfloat[\fullSP+EdgeBranching+Heur\label{fig:timeT2heur}]{{\includegraphics[width=.45\textwidth]{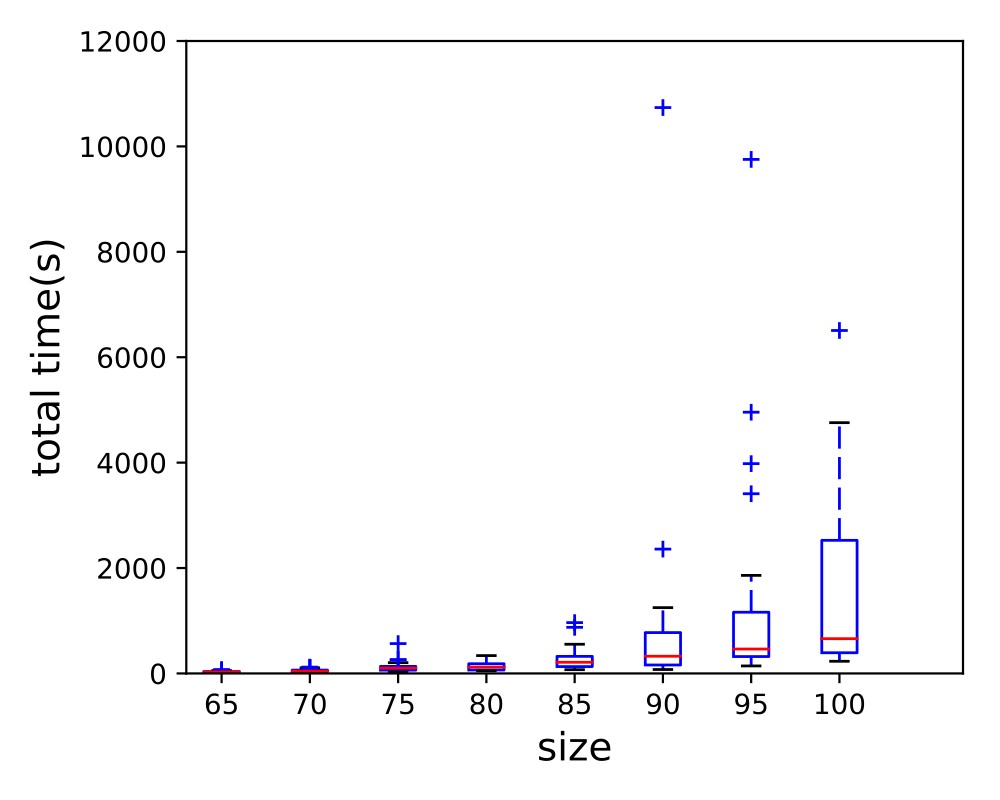}
	}}
	\qquad
	\subfloat[\fullSP+EdgeBranching+Heur+Deg\label{fig:timeT2deg}]{{\includegraphics[width=.45\textwidth]{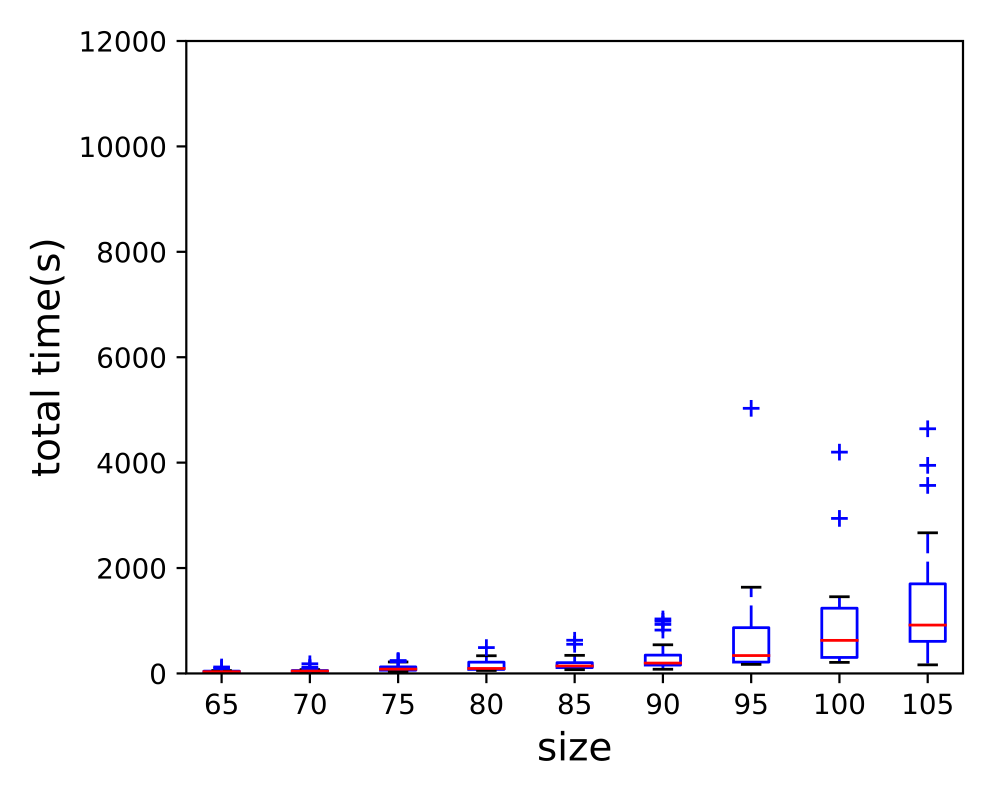}}}%
	\vspace{1ex}
	
	\caption{Total solving time for the \fullSP with different configurations.
	\label{fig:timeFullAll}}%
\end{figure}

\begin{figure}[!hbt]
	\centering
	\subfloat[\fullSP \label{fig:nodesT2full}]
	{{\includegraphics[width=.45\textwidth]{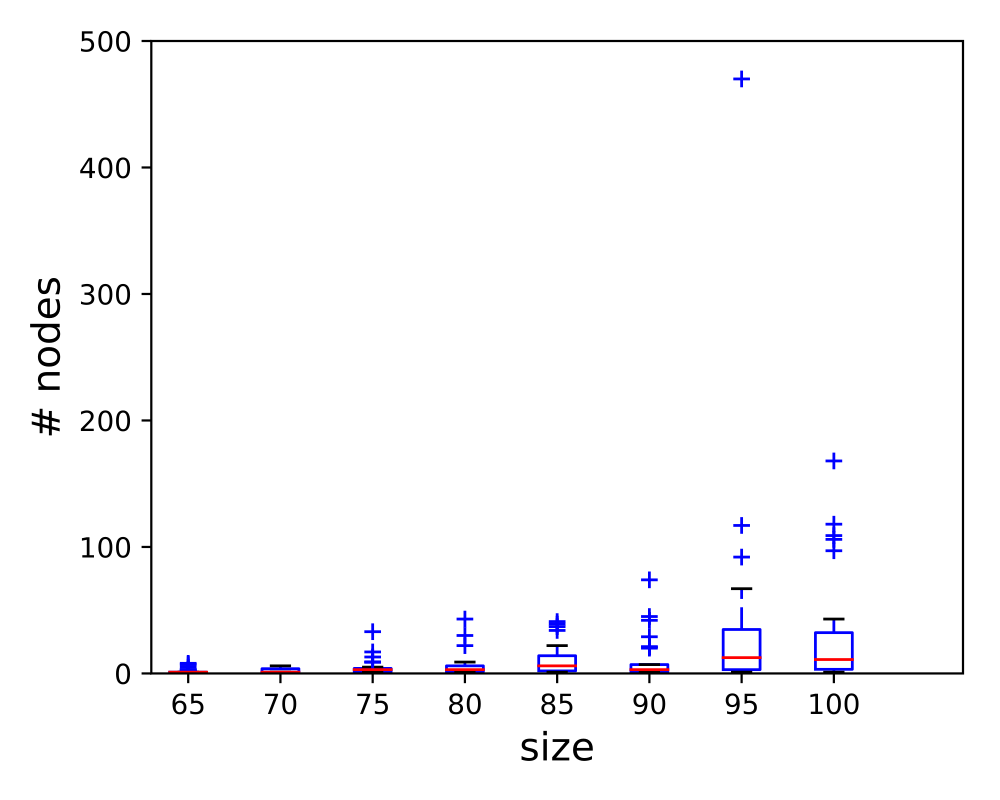}}}
	\hfill
	\subfloat[\fullSP+EdgeBranching\label{fig:nodesT2edges}]
	{{\includegraphics[width=.45\textwidth]{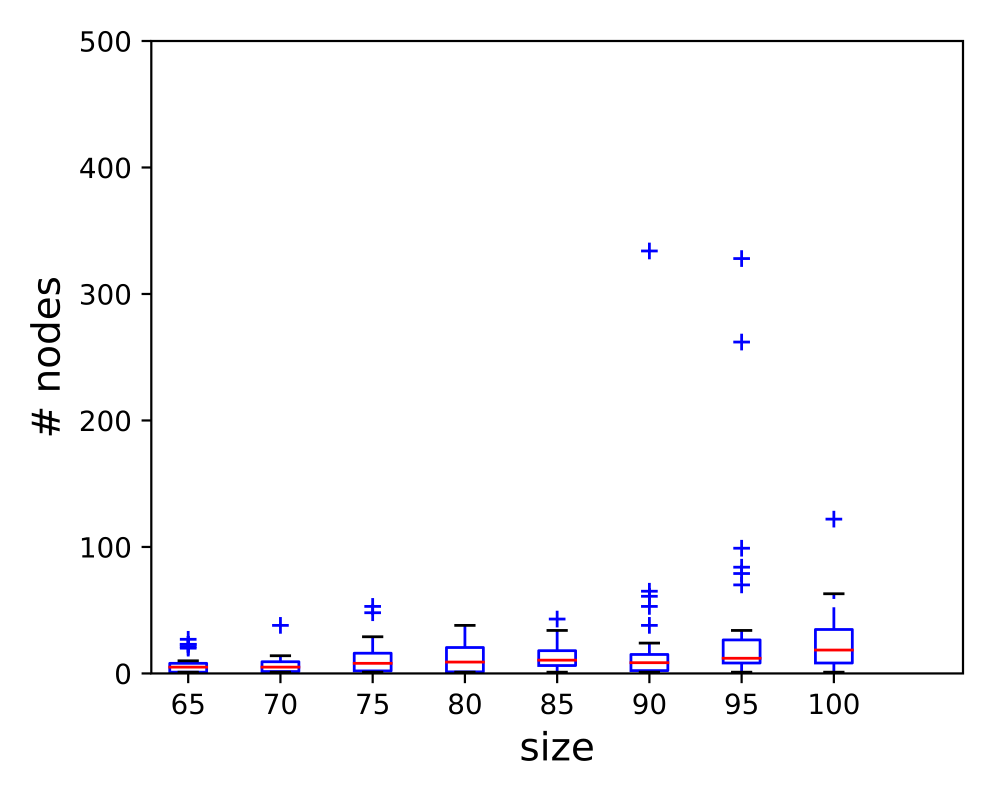}}}%
	\hfill
	\subfloat[\fullSP+EdgeBranching+Heur\label{fig:nodesT2heur}]
	{{\includegraphics[width=.45\textwidth]{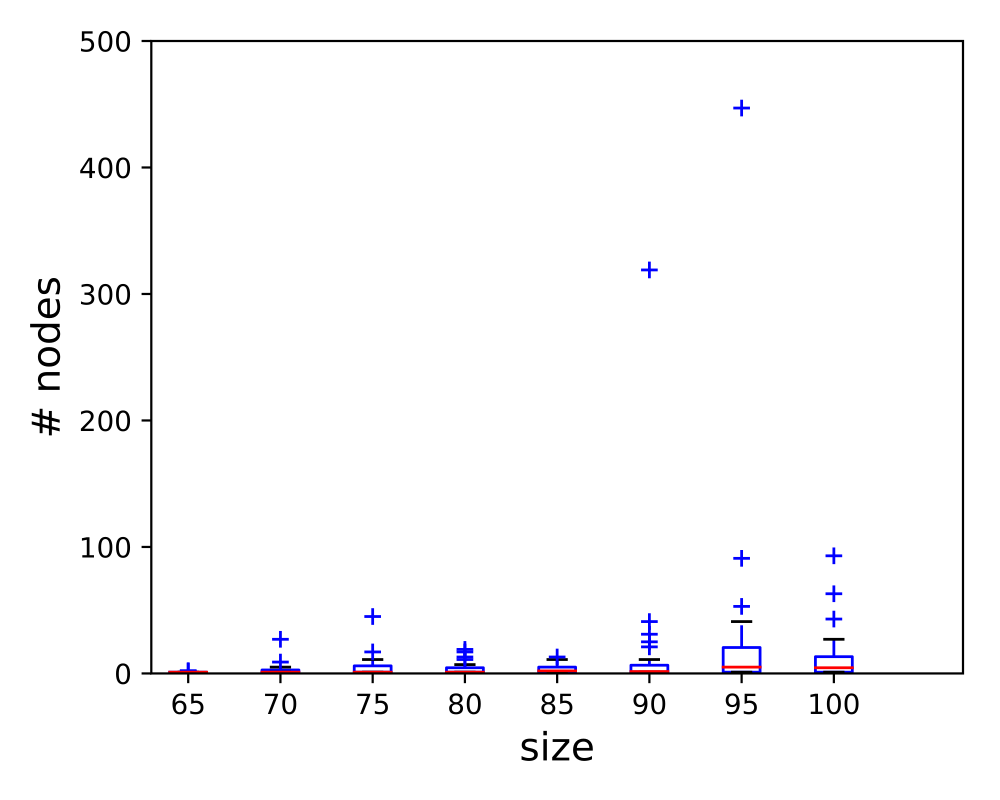}}}
	\hfill
	\subfloat[\fullSP+EdgeBranching+Heur+Deg\label{fig:nodesT2deg}]
	{{\includegraphics[width=.45\textwidth]{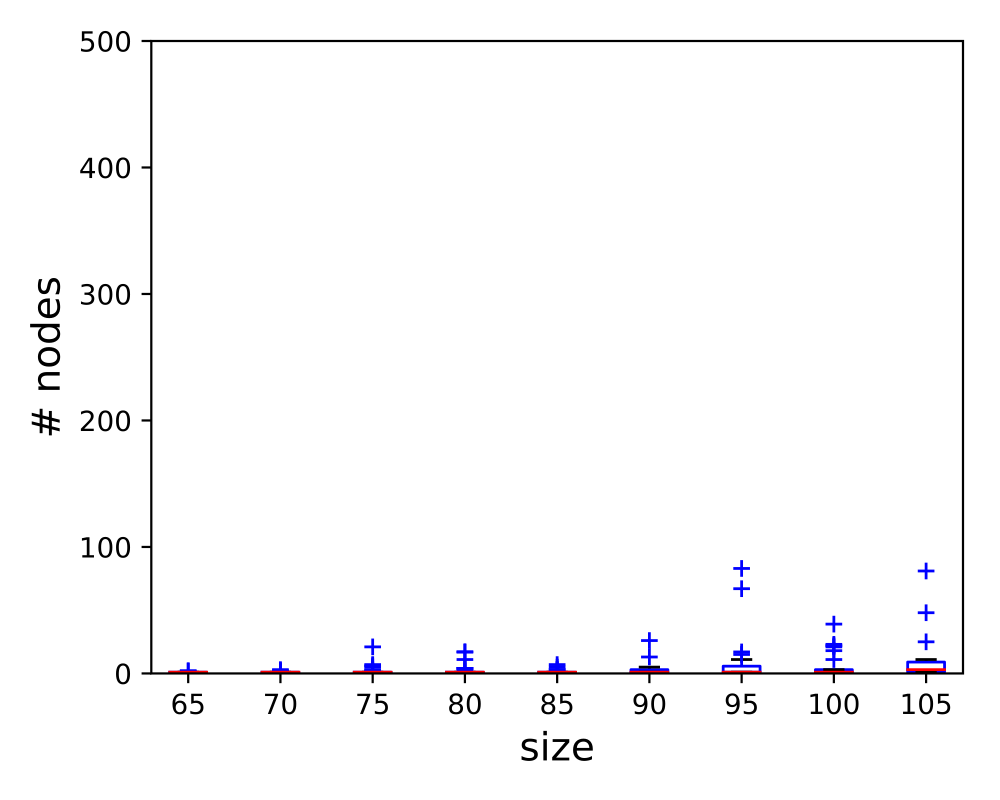}}}
	
	\caption{Number of nodes of the search tree explored during for different configurations of \fullSP. \label{fig:nodesFullAll}}%
\end{figure}

\paragraph{Primal Heuristic}
Next, we investigate the effect of introducing a custom primal heuristic. The positive impact of the inclusion of the primal heuristic described in Section~\ref{sec:heur} is evidenced by comparing Figures ~\ref{fig:timeT2edges} and~\ref{fig:timeT2heur}. The lower bound provided by the model is very strong. Thus, having good heuristic solutions soon considerably reduces the number of nodes explored. See Figures~\ref{fig:nodesT2edges} and~\ref{fig:nodesT2heur}.

\paragraph{Degree Constraints}
Finally, we discuss the impact of the introduction of constraints~\eqref{eq:deg-const}. The performance gain from this inclusion is shown in Figures ~\ref{fig:timeT2heur} and ~\ref{fig:timeT2deg}. Degree constraints considerably increase the strength of the model, even improving the lower bound of simple instances, as discussed in~\cite{barboza2019}.This is shown in Figure~\ref{fig:difHeurDeg}, where lower bounds of the \fullSP, with and without Degree Constraints, are compared with the optimal values. \asbC{This stronger formulation leads to fewer nodes being explored during search, as shown in Figure \ref{fig:nodesT2deg}}. \asbC{Instances with 105 points were solved, on average$\;\pm\;$std-dev, in $1356 \pm 1102$ seconds when using degree constraints}.

\begin{figure}[!htb]
	\centering\includegraphics[width=\textwidth]{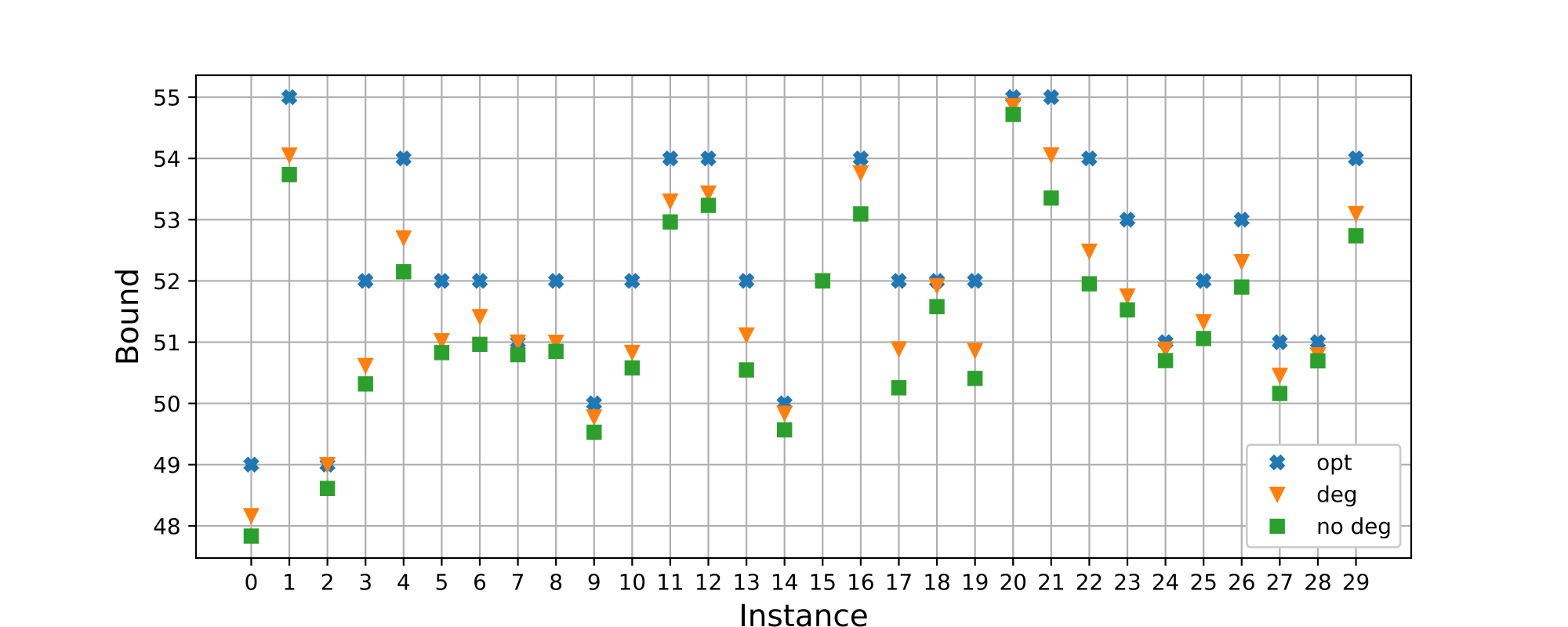}
	\vspace{1ex}
	\caption{Lower bound of the relaxations given by the \fullSP with and without Degree Constraints for instances of size 90 compared with the optimal value. \label{fig:difHeurDeg}}
	
\end{figure}

\paragraph{Column Generation}
All the previously discussed features lead to our best performing \fullSP. However, due to the exponential number of columns, the model eventually becomes too large to fit in memory. In our initial testing, instances with 190 points used more than 32GB of RAM during model creation. For instances with 180 points, the solver started running with less than 2GB left, it is very likely that there would not be enough memory available after branching.

With the memory issue in mind, we try to find the best configuration for the \CGSP.
The RMP is initialized with the set of all triangles and the polygons that belong to the initial heuristic solution.

As mentioned in Section~\ref{sec:degcons}, the simple inclusion of the degree constraints significantly worsen the performance of the model. We overcame this issue by implementing a cutting plane procedure to separate degree constraints. Since the number of constraints is $O(n)$, the separation is made by inspection, only including cuts with a significant violation of at least $0.1$. 

\begin{figure}[!hbt]
	\centering
	\subfloat[No Stab\label{timeT2tri}]{{\includegraphics[width=.45\textwidth]{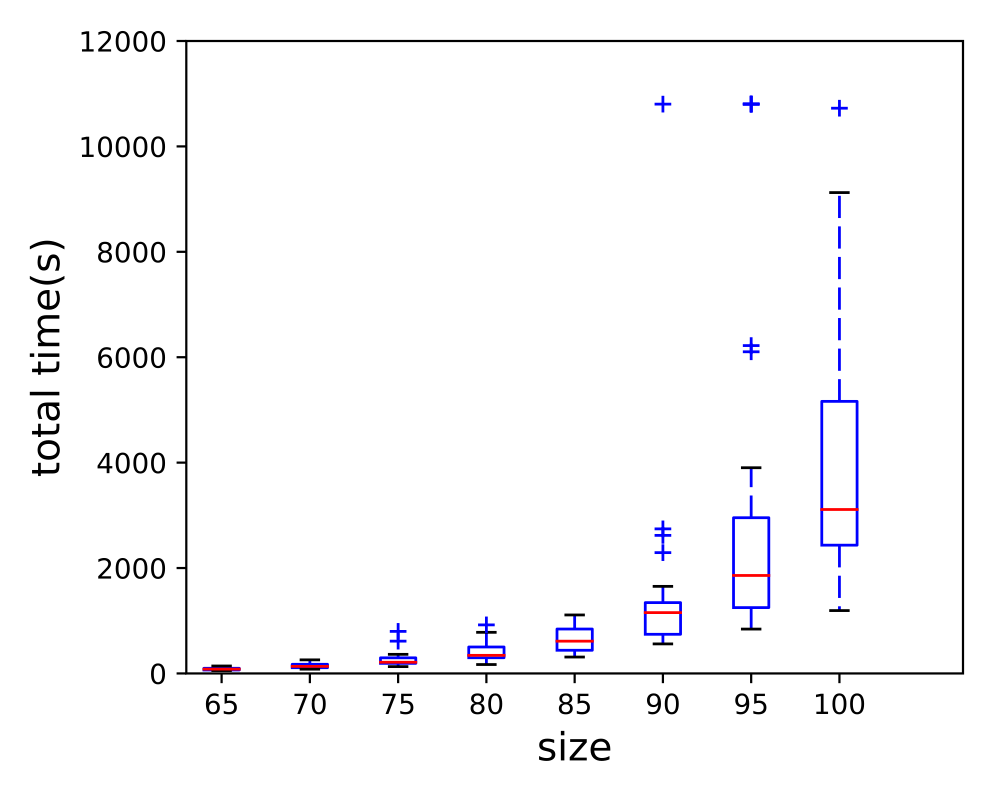}
	}}
	\hfill
	\subfloat[Weingetz\label{timeT2K1}]{{\includegraphics[width=.45\textwidth]{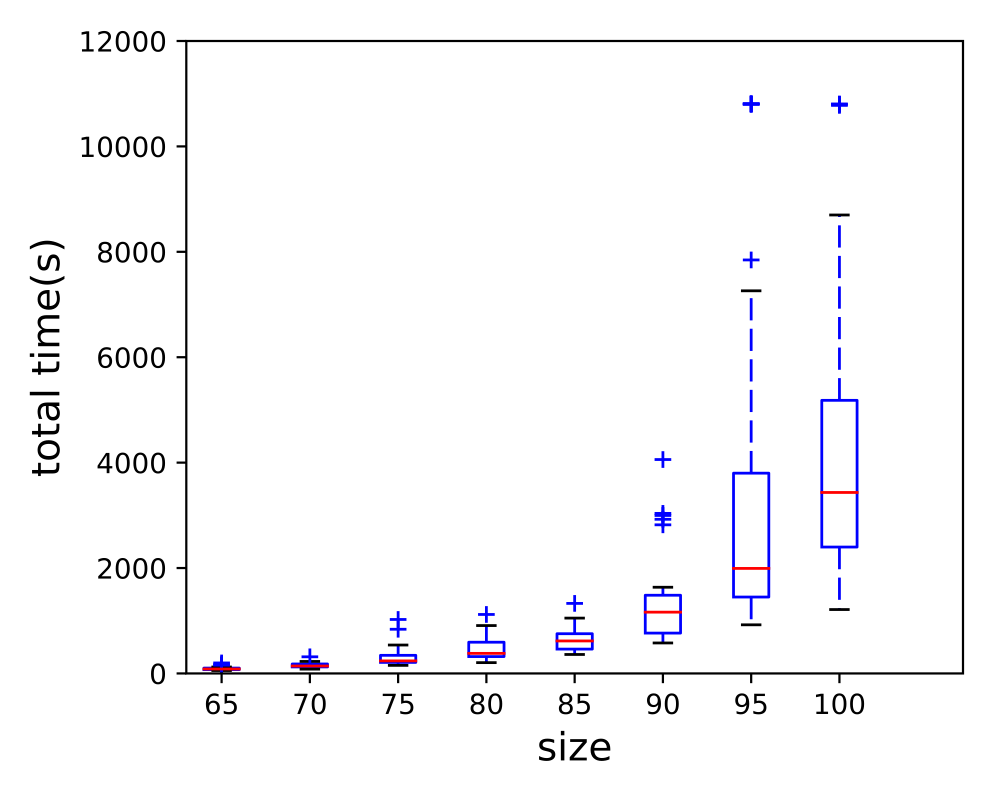}}}\\[-0.5ex]
	\subfloat[Barrier\label{timeT2K3}]{{\includegraphics[width=.45\textwidth]{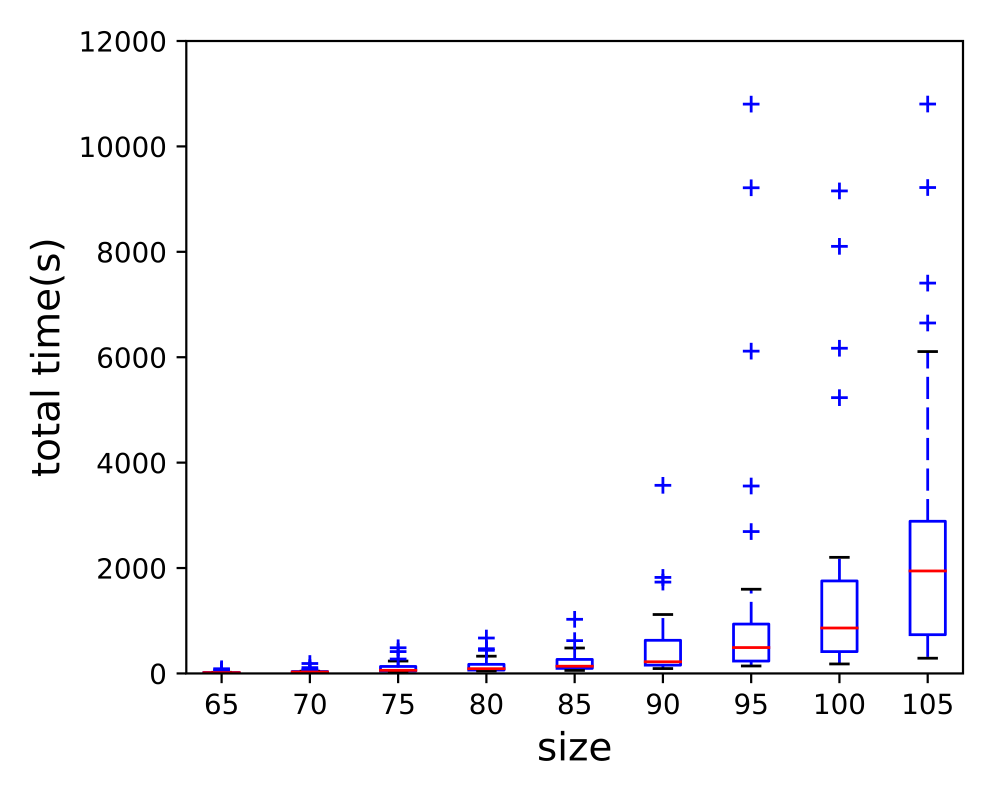}
	}}
	\caption{Total solving time for the \CGSP\ for different set stabilization methods.}
	\label{fig:timeStab}
\end{figure}

Finally, we compare different approaches to stabilize the column generation. As discussed in Section~\ref{sec:stab}, we studied three options: no stabilization, Weingetz Stabilization, and Barrier Methods. For the Weingetz Stabilization, using the {\tt irace} package\cite{irace}, we found that the best stabilization parameter is $\alpha=0.55$. The results are shown in Figure \ref{fig:timeStab}. As we can see, although no noticeable difference is observed when Weingetz method is used compared to not applying any stabilization at all, by replacing Simplex with the Barrier Method, we achieve a significant performance improvement. \asbC{Instances with 105 points were solved, on average$\;\pm\;$std-dev, in $2666 \pm 2711$ seconds when using the Barrier Method.}

As expected, when the entire model fits into memory, a better performance is obtained compared to running a column generation algorithm. This is due not only to the time spent solving the pricing problem many times per node, but also to the fact that modern ILP solvers are equipped with extremely powerful pre-processing routines that can reach their maximum potential when the models are completely loaded into memory.
However, the trade off between solution time and memory consumption, illustrated in Figure \ref{fig:memFinal}, reveal that column generation leads to significant savings in memory usage without drastic losses in performance. Thus, the technique is a good alternative when memory becomes the limiting factor to solving instances.

\begin{figure}[!htb]
	\centering
	\subfloat[\fullSP\label{memFull}]{{\includegraphics[width=.45\textwidth]{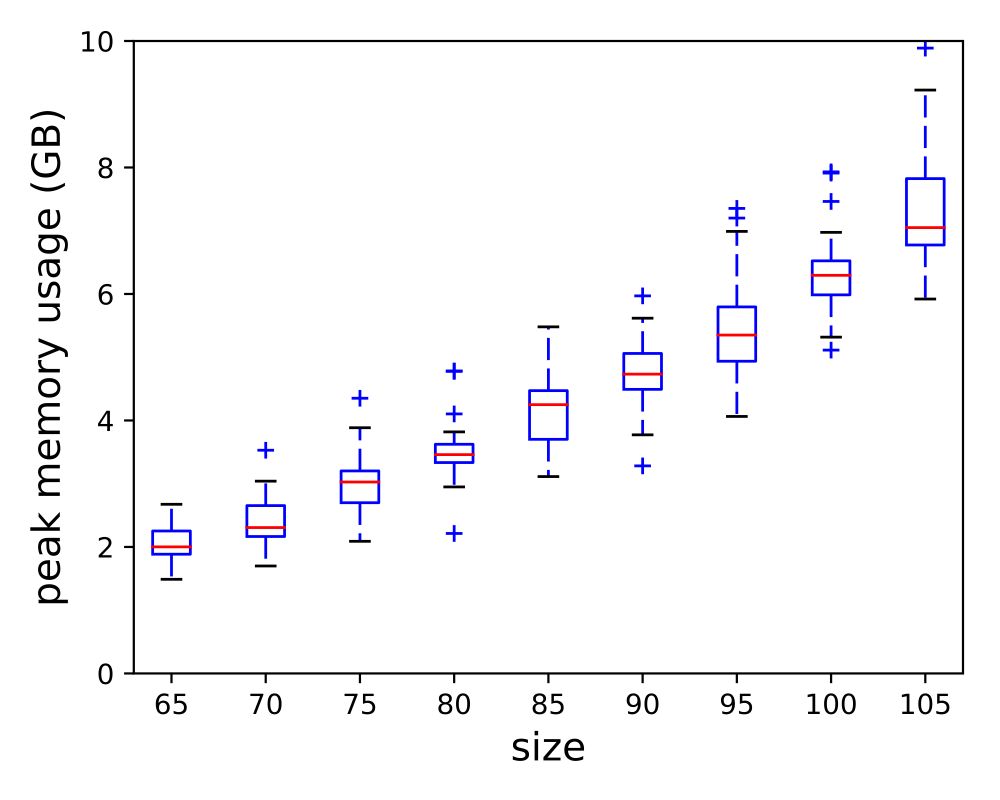}
	}}
	\hfill
	\subfloat[\CGSP Barrier\label{memBarrier}]{{\includegraphics[width=.45\textwidth]{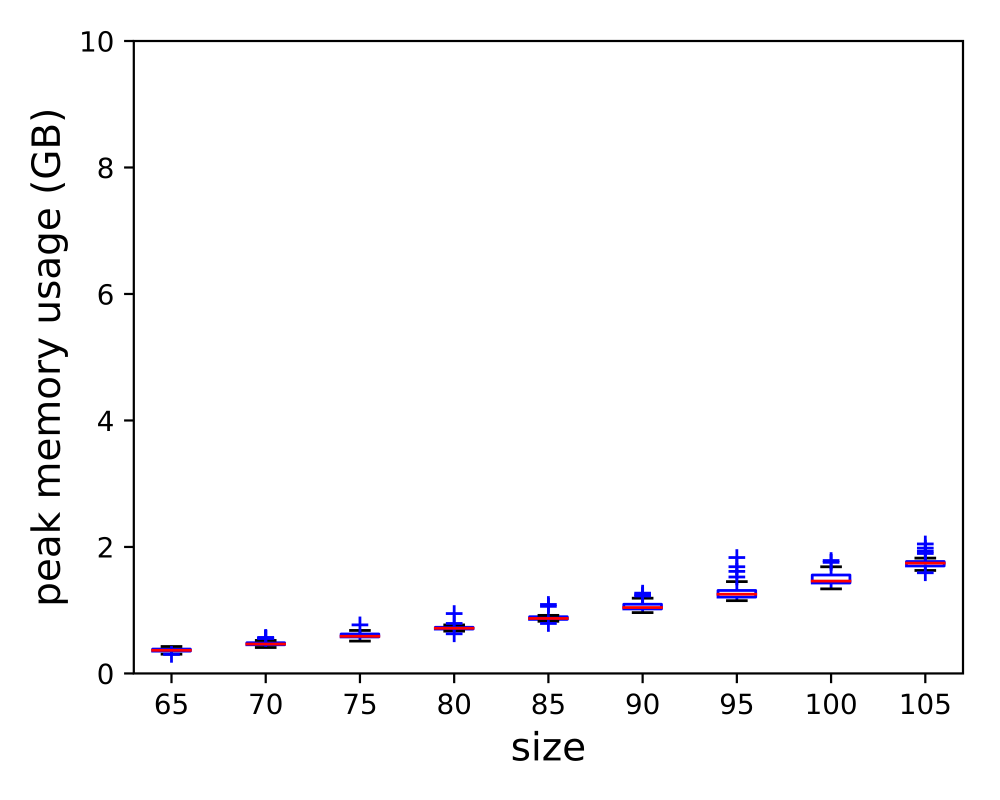}}}%
	\caption{Peak memory consumption for the two variations of the Set Partition Model: full and column generation using the barrier method.}
	\label{fig:memFinal}
\end{figure}

\section{Conclusions}
\label{sec:conclusions}

\cidC{\pjrC{In this paper, we discussed a new model for the \MCPP featuring variables assigned also to the convex polygons having vertices on the input point set, in contrast to the previous known formulation that only contained variables associated to edges with endpoints in that set. To cope with  the exponential number of variables, we proposed a column generation based algorithm and discussed implementation aspects that made it more efficient. The aspects investigated include the deployment of stabilization methods, which led to the use of the Barrier Method for solving linear relaxations, the development of a primal heuristic and of a simple yet effective branching rule. Also, a family of cuts inspired by a previously known ILP model was incorporated to the algorithm which, because of the dynamic and simultaneous inclusion of variables and constraints to the model, is characterized as a {\em branch-and-cut-and-price} algorithm. 
All those different aspects were assessed through a series of experiments to show their individual contribution to the algorithm's performance. As a result, constrained to identical computational resources, the new algorithm was able to solve instances with more than twice the size of what was possible in previous works.
Directions for future research include investigating a different stabilization model, and finding facet defining cuts.}}

\paragraph*{Acknowledgments}
{\small This work was supported in part by grants from: {\it Brazilian Nation\-al Council for Scientific and Technological Development} (CNPq),
~\#313329/2020-6, 
~\#309627/2017-6, 
~\#304727/2014-8; 
 {\it São Paulo Research Foundation} ({\sc Fapesp}), %
~\#2020 /09691-0, 
~\#2018/26434-0, 
~\#2018/14883-5. 
~\#2014/12236-1; 
 and {\it Fund for Support to Teaching, Research and Outreach Activities} (FAEPEX).}

{\small
\bibliography{main}
\bibliographystyle{abbrv}
}

\end{document}